\numberwithin{equation}{section} 
\theoremstyle{plain}
  \newtheorem{prop}{Proposition}
  \newtheorem{cor}{Corollary}
\theoremstyle{definition}
  \newtheorem{remark}{Remark}
  \numberwithin{prop}{section}
   \numberwithin{cor}{section}
   \numberwithin{remark}{section}
\newcommand{\sgn}{\mathrm{sgn}}
\title{\Large\bfseries Integrability enabled computations relating to the fixed trace Laguerre ensemble}
\author{Peter J. Forrester${}^1$ and Shinsuke M. Nishigaki${}^2$}
\date{}
\begin{document}

\maketitle

${}^1$ School of Mathematics and Statistics,  The University of Melbourne,
Victoria 3010, Australia. \: \: Email: {\tt pjforr@unimelb.edu.au}; \\

${}^2$ Graduate School of Natural Science and Technology, Shimane University, \\
Matsue 690-8504, Japan.  \: \: Email: {\tt nishigaki@riko.shimane-u.ac.jp}

\date{}


\begin{abstract}
Studies of density matrices for random quantum states lead naturally to the fixed
trace Laguerre ensemble in random matrix theory. Previous studies have uncovered
explicit rational function formulas for moments of purity statistic (trace of
the squared density matrix), and also a third order linear differential equation satisfied
by the eigenvalue density. We further probe the origin of these results from
the viewpoint of integrability, which  is taken here to mean wider classes of recursions
and differential equations, and give extensions. Prominent in our study are first order linear
matrix differential equations. One application given is to the derivation of the third
order scalar equation for the density.  Another is to obtain the explicit rational
function formula for the variance of the purity statistic in the $\beta$ generalised fixed
trace Laguerre ensemble. In the original case ($\beta = 2$), the purity cumulants are 
expressed in terms of the large argument expansion of a particular
$\sigma$-Painlev\'e IV transcendent. In a different but related direction, the exact computation of the two-point correlation for the
fixed determinant circular unitary ensemble SU$(N)$ is given the Appendix.
\end{abstract}

 {\it Dedicated to the memory of Santosh Kumar and his work on exact results in RMT and their applications\footnote{For the context of the present work in the interests and contributions of Santosh Kumar in random matrix theory (RMT), the reader is referred to the tribute and review article \cite{Fo25}.}}

\maketitle

\section{Introduction}

By definition, a standard Gaussian complex random matrix $G$ has all
entries independent and identically distributed as standard complex random
variables ${\rm N}[0,1/\sqrt{2}] + i {\rm N}[0,1/\sqrt{2}]$ --- here the adjective
``standard" refers to the mean of the squared modulus equalling unity. The
Gaussian unitary ensemble (GUE) of complex Hermitian matrices is constructed out
of the matrices $\{G \}$ by forming ${1 \over 2} ( G + G^\dagger)$. The corresponding
distribution on the space of Hermitian matrices is proportional to $e^{- {\rm Tr} \, G^2}$.
Changing variables to the eigenvalues $\{ \lambda_j^2 \}_{j=1}^N$ 
(here we are assuming $G$ is of size $N \times N$) 
and the eigenvectors gives for the eigenvalue probability density function (PDF)
\begin{equation}\label{0.1}
{1 \over C_N^{(\rm G)}} \prod_{l=1}^N e^{- \lambda_l^2} \prod_{1 \le j < k \le N} (\lambda_k - \lambda_j)^2,
\end{equation}
supported on $\lambda_l \in \mathbb R$ ($l=1,\dots,N$), where the normalisation $C_N^{(\rm G)}$ is
given by
\begin{equation}\label{0.1a}
C_N^{(\rm G)} = \pi^{N/2} 2^{- N (N - 1)/2}  \prod_{l=1}^N l !;
\end{equation}
see e.g.~\cite[Prop.~1.3.4 with $\beta = 2$]{Fo10}.

Suppose instead that the size of $G$ is $n \times N$, $n \ge N$, also referred to as a rectangular complex Ginibre matrix
\cite{BF25}. The Laguerre unitary ensemble (LUE) of
complex Hermitian positive definite matrices is constructed out of $\{ G\}$ by forming $G^\dagger G$, and
the corresponding eigenvalue PDF is given by
\begin{equation}\label{0.2}
p_{N,a}^{(\rm L)}(\lambda_1,\dots,\lambda_N) := {1 \over C_{N,a}^{(\rm L)}} \prod_{l=1}^N \lambda_l^a e^{- \lambda_l} \prod_{1 \le j < k \le N} (\lambda_k - \lambda_j)^2, \quad a := n - N,
\end{equation}
supported on $\lambda_l \in \mathbb R^+$ ($l=1,\dots,N$), where the normalisation $C_{N,a}^{(\rm L)}$ is
given by
\begin{equation}\label{0.2a}
C_{N,a}^{(\rm L)} = N!  \prod_{l=0}^{N-1} \Gamma(l+1) \Gamma(a+l+1);
\end{equation}
see e.g.~\cite[Prop.~3.2.2 with $\beta = 2$, $m \mapsto N$]{Fo10}.
The fixed trace (chosen to equal unity) Laguerre unitary ensemble (fLUE) is obtained by the
same construction involving $\{ G\}$, now multiplied by a scalar factor $1/{\rm Tr} \, G^\dagger G$. The
eigenvalue PDF is
\begin{equation}\label{0.3}
p_{N,a}^{(\rm fL)}(\lambda_1,\dots,\lambda_N) :=
{1 \over C_{N,a}^{(\rm fL)}}  \delta \Big (1 -  \sum_{l=1}^N \lambda_l \Big ) \prod_{l=1}^N \lambda_l^a \prod_{1 \le j < k \le N} (\lambda_k - \lambda_j)^2, \quad a := n - N,
\end{equation}
supported on $\lambda_l \in (0,1)$ ($l=1,\dots,N$), 
where the normalisation $C_{N,a}^{(\rm fL)}$ is
given by
\begin{equation}\label{0.3a}
C_{N,a}^{(\rm fL)} = {N! \over \Gamma(aN +N^2) }\prod_{l=0}^{N-1} \Gamma(l+1) \Gamma(a+l+1);
\end{equation}
see e.g.~\cite[Eq.~(4.155) with $\beta = 2$,  $m \mapsto N$]{Fo10}. Note that the eigenvalues 
must sum to unity as indicated by the Dirac delta function.

The eigenvalue PDF (\ref{0.3}) is prominent in the study of the bipartite entanglement of
a finite basis quantum mechanical system ; see e.g.~\cite{BZ06}.
The two subsystems, denoted $A$ and $B$ say, are taken to have dimensions
$n$ and $N$ and orthonormal bases $\{|a_i\rangle \}_{i=1,\dots,n}$, $\{|b_i\rangle \}_{i=1,\dots,N}$ 
respectively. Hence a state $|\psi \rangle$ can be written
\begin{equation}\label{2.vag2}
| \psi \rangle = \sum_{i=1}^n \sum_{j=1}^N x_{i,j} |a_i \rangle \otimes
 |b_j \rangle.
\end{equation}
On the other hand, according to the Schmidt decomposition, for a given $| \psi \rangle$,
there exists orthonormal vectors $\{ |v_i^A\rangle \}_{i=1,\dots,N}$ and
$\{ |v_i^B\rangle \}_{i=1,\dots,N}$ relating to the reduced density matrices of the two subsystems
such that 
\begin{equation}\label{2.vag3}
|\psi \rangle = \sum_{i=1}^N \sqrt{\lambda_i} |v_i^A \rangle \otimes |v_i^B \rangle, \quad \sum_{j=1}^N \lambda_j = 1;
\end{equation}
see e.g.~\cite[\S 3.3.4]{Fo10}. Under the assumption that the 
coefficients $\{x_{i,j}\}$ in (\ref{2.vag2}) are independent standard complex Gaussians
so that $[x_{i,j}] = G$ 
as specified above (\ref{0.2}),
up to the constraint $\sum_{i=1}^n \sum_{j=1}^N | x_{ij} |^2 = {\rm Tr} \, G^\dagger G = 1$
to ensure the normalisation $\langle \psi | \psi \rangle = 1$, the distribution of the coefficients $\{\lambda_i \}$
in (\ref{2.vag3}) is given by (\ref{0.3}).

Quantities used to quantify the bipartite entanglement include the purity, defined as $\sum_{j=1}^N \lambda_j^2$
(i.e.~the trace of the squared reduced density matrix for subsystems $A$ or $B$;
see e.g.~\cite{BZ06}) and the von Neumann entropy $-\sum_{j=1}^N \lambda_j \log \lambda_j$
(i.e.~minus the trace of the  reduced density matrix times the logarithm of the reduced density matrix).
Generalising both is the so-called quantum Tsallis entropy $(1/(1-q))(1- \sum_{j=1}^N \lambda_j^q)$
(set $q=2$  to relate to the purity, and take the limit $q \to 1$ to obtain the
 von Neumann entropy).
All these quantities are examples of linear statistics, being of the form 
\begin{equation}\label{Aa}
A = \sum_{j=1}^N a(\lambda_j)
\end{equation}
 for some
function of a single variable $a(x)$.

Consider a random state  as specified below (\ref{2.vag3}). Let $\rho_{(1)}^{(\rm fL)}(x)$
denote the density of eigenvalues corresponding to the PDF (\ref{0.3}).
With respect to the random state, for the linear statistic $A$ we have
\begin{equation}\label{Aa1}
\langle  A \rangle = \int_I a(x)  \rho_{(1)}^{(\rm fL)}(x) \, dx,
\end{equation}
where $I$ denotes the interval of support of the density.  An expression for $\rho_{(1)}^{(\rm fL)}(x)$
is given in \cite[Eq.~(31)]{KP11a}, which involves the difference of two single sums from 0 up to $N-1$ of Gauss
hypergeometric functions. Another form, involving a sum over $\{(x/(1 - x))^l \}_{l=0}^{2N-2}$, where the
coefficients themselves are further sums from 0 up to $N$, is known from
\cite{ATK09,Vi10,VPO16} and can also be found in \cite{KP11a}. With a theme based on
computations relating to (\ref{0.3}) enabled by integrability --- specifically recursions and differential
equations ---
of particular  interest to us
 is the fact, deduced in \cite{ATK11}, that $\rho_{(1)}^{(\rm fL)}(x)$ satisfies the third order
linear differential equation
\begin{equation}\label{Aa1X}
\Big ( a_3(x) {d^3 \over d x^3} + a_2(x)  {d^2 \over d x^2} + a_1(x)  {d \over d x} + a_0(x) \Big ) f(x) = 0,
\end{equation}
where
\begin{align}\label{Aa1Y}
a_0(x) & = x (2N + a) (N^2 + a N - 2) - a^2 \nonumber \\
a_1(x) & = -x \Big ( x^2 (N^2 + a N - 4)(N^2 + a N - 3) - x (2N + a) ( 2 N^2 + 2 a N - 7) + a^2 - 2 \Big )
\nonumber  \\
   a_2(x) & =
   2 x^2 \Big (x^2 (N^2 + N a - 4) -x (2N + a) + 2 \Big )
    \nonumber \\
   a_3(x) & =
   x^3 \left(1- {x^2}\right).
   \end{align}
 As an application, it was shown in  \cite{ATK11} that this differential equation is
 well suited to a large $N$ asymptotic analysis using the WKB method, resulting in 
 asymptotic formulas which accurately reproduce the finite $N$ graph of $\rho_{(1)}^{(\rm fL)}(x)$.
 We will show in Section \ref{S2.1} that the differential equation (\ref{Aa1X}) can be derived from
 the third order linear differential equation already known from the density of the Laguerre
 unitary ensemble as determined by the eigenvalue PDF (\ref{0.2}) \cite{GT05,ATK11,RF19}. 
  A viewpoint of (\ref{Aa1X}) based on a first order $3 \times 3$ matrix differential equation is given in 
 Section \ref{S2.2}. Some comparative features of $\rho_{(1)}^{(\rm L)}(x)$ and $\rho_{(1)}^{(\rm fL)}(x)$
 are made in Section \ref{S2.3}.

The distribution of the purity for the random state specified below (\ref{2.vag3}) is given by
\begin{equation}\label{d1}
P_{N,a}(t) = {1 \over C_{N,a}^{(\rm fL)}}  \int_0^1 d\lambda_1 \cdots \int_0^1 d\lambda_N \,  
\delta \Big ( t -  \sum_{j=1}^N \lambda_j^2 \Big )
 \delta \Big (1 -  \sum_{l=1}^N \lambda_l \Big ) \prod_{l=1}^N \lambda_l^a \prod_{1 \le j < k \le N} (\lambda_k - \lambda_j)^2.
\end{equation}
For general $N$ the exact calculation of (\ref{d1}) is from a practical viewpoint intractable, with its complexity increasing with increasing $N$ and $a$. However, some exact results for small $N$ and or $a$ are available.
Thus its
evaluation is
almost immediate for $N = 2$, being given by \cite[Eq.~(9)]{Gi07a}
\begin{equation}\label{d1a}
P_{2,a}(t) = {(2a+3)! \over 2^{a+1} (a+1)! a!} (1 - t)^{a} \sqrt{2t - 1},
\end{equation}
supported on $1/2 < t < 1$. For $a=0$ and $N = 3$, the distribution is piecewise real analytic, 
involving only elementary functions, on the intervals
$1/3 < t < 1/2$ and $1/2 < t < 1$; see \cite[Eq.~(16)]{Gi07a}. A still more complicated
expression, although again involving only elementary functions, holds for
$N=3$ and general $a \in \mathbb Z^+$; see  \cite[Eq.~(31)]{Gi07a}, and
for $N=4$ and $a=0$; see  \cite[Eq.~(24)]{Gi07a}.

Progress for general $N$ is possible by considering
the Laplace-Fourier transform of (\ref{d1}),
\begin{equation}\label{d1+}
\hat{P}_{N,a}^{\rm (fL)}(s) = {1 \over C_{N,a}^{(\rm fL)}}  \int_0^1 d\lambda_1 \cdots \int_0^1 d\lambda_N \,  
 \delta \Big (1 -  \sum_{l=1}^N \lambda_l \Big ) \prod_{l=1}^N e^{-s   \lambda_j^2}
 \lambda_l^a \prod_{1 \le j < k \le N} (\lambda_k - \lambda_j)^2,
\end{equation}
or equivalently the moments
\begin{equation}\label{d1a+}
\Big \langle \Big (   \sum_{j=1}^N \lambda_j^2 \Big )^k \Big \rangle^{\rm (fL)},
\end{equation}
$k=1,2,\dots$; thus (\ref{d1+}) is the exponential generating function for
(\ref{d1a+}).  In particular, we know from \cite[Eq.~(10), corrected by a factor of $N!$]{Gi07} and
\cite[Eq.~(17)]{Gi07a} that
\begin{multline}\label{d1b}
\Big \langle \Big (   \sum_{j=1}^N \lambda_j^2 \Big )^k \Big \rangle^{\rm (fL)} =
{N! (N (N + a) - 1)! \over ( N(N + a) + 2k - 1)!}  \\ \times \sum_{k_1,\dots,k_N \ge 0 \atop
k_1 + \cdots + k_N = k}
{k! \over k_1! \cdots k_N!} \prod_{i=1}^N
{(N + a + 2k_i - i)! \over (N + a - i)! i!}
\prod_{1 \le i < j \le N} (2 k_i - i - 2k_j + j).
\end{multline}

To be addressed in Section \ref{S3.1} of the present paper is a relationship between
the moments (\ref{d1a+}) and  a
particular $\sigma$-Painlev\'e IV transcendent. This in turn comes about through a
relationship between the LUE version of (\ref{d1+})
\begin{equation}\label{d1L}
\hat{P}_{N,a}^{\rm (L)}(s) = {1 \over C_{N,a}^{(\rm L)}}  \int_0^\infty d\lambda_1 \cdots \int_0^\infty d\lambda_N \,  
\prod_{l=1}^N e^{-\lambda_l -s   \lambda_l^2}
 \lambda_l^a \prod_{1 \le j < k \le N} (\lambda_k - \lambda_j)^2,
\end{equation}
and the GUE average
\begin{equation}\label{d1aG}
\Big \langle    \prod_{j=1}^N( \lambda_j - 1/(2 \sqrt{s}) )^a \chi_{\lambda_l > 1/(2 \sqrt{s})}  \Big \rangle^{\rm (G)},
\end{equation}
where $\chi_A = 1$ if $A$ is true, $\chi_A = 0$ otherwise. Thus we have \cite[minor generalisation of (82)]{VPO16}
\begin{equation}\label{G1}
\hat{P}_{N,a}^{\rm (L)}(s) = {C_N^{\rm (G)} \over C_{N,a}^{(\rm L)}} e^{N/4s} (1 / \sqrt{s} )^{N  a  + N^2}
\Big \langle    \prod_{j=1}^N( \lambda_j - 1/(2 \sqrt{s}) )^a \chi_{\lambda_j > 1/(2 \sqrt{s})}  \Big \rangle^{\rm (G)},
\end{equation}
while the average (\ref{d1aG}) is known to relate to a $\tau$-function  in the theory of
Painlev\'e IV \cite{FW00} (see also \cite[Ch.~8]{Fo10}), and so satisfies a particular second order nonlinear differential equation.
The large argument Laurent solution of the latter systematically generates the LUE version of the moments
(\ref{d1a+}), and thus the moments (\ref{d1b}) themselves, as they are related by a simple proportionality
(see (\ref{ds1e+})). In Section \ref{S3.2} we make use of a known integration formula involving the Schur
polynomial weighted against the PDF (\ref{0.2}) for purposes of  generalising the formula (\ref{d1b}) so that it holds for
\begin{equation}\label{1.16a}
\Big \langle \Big (   \sum_{j=1}^N \lambda_j^q \Big )^k \Big \rangle^{\rm (fL)},
\end{equation}
where $q$ is a non-negative integer.

First order linear matrix differential equations, introduced in Section \ref{S2.2} in relation to a viewpoint on
the origin of the differential equation (\ref{Aa1X}), are derived again in Section \ref{S4}. This is for the
purpose of obtaining recurrences for the power series expansion of (\ref{d1L}) with the exponent
in the product over differences generalised from the value 2 to a general parameter $\beta$. As a result we
are able to explicitly determine the $k=1$ and $k=2$ moments (\ref{d1a+}) in the general $\beta$ case.
Motivated by the analogue of a fixed trace Hermitian matrix for unitary matrices
 being a unit determinant contraint, we provided in Appendix A the exact functional form 
of the two-point correlation for  Haar distributed SU$(N)$. .

\section{Differential equations for the density of the ${\rm fLUE}$}
\subsection{Transforming the scalar differential equation for the LUE density}\label{S2.1}
Generally, for an eigenvalue PDF $p_N(\lambda_1,\dots,\lambda_N)$, supported on $\lambda_l \in I$, $(l=1,\dots,N)$,
the eigenvalue density $\rho_{(1)}(x)$ is given by
\begin{equation}\label{u1}
\rho_{(1)}(x) = N \int_I d\lambda_2 \cdots \int_I d\lambda_N \, p_N(x,\lambda_2,\dots,\lambda_N).
\end{equation}
For the
Laguerre
 unitary ensemble we have $p_N = p_N^{( \rm L )}$ as specified by (\ref{0.2}). Previous work  \cite{GT05,ATK11,RF19}
 has shown that the corresponding density, $\rho_{(1)}^{(L)}(x)$ say, satisfies the third order linear differential
 equation
 \begin{equation}\label{u2}
\bigg ( x^3\frac{\mathrm{d}^3}{\mathrm{d}x^3}+4x^2\frac{\mathrm{d}^2}{\mathrm{d}x^2}-\left[x^2-2(a+2N)x+a^2-2\right]x\frac{\mathrm{d}}{\mathrm{d}x}+\left[(a+2N)x-a^2\right] \bigg ) f(x) = 0.
 \end{equation}
Our first objective is to show that the differential equation (\ref{Aa1X}) can be deduced as a corollary of (\ref{u2}).

To begin, generalise (\ref{0.3}) by introducing a parameter $t$ into the delta function,
\begin{equation}\label{0.3at}
p_{N,a}^{(\rm fL)}(\lambda_1,\dots,\lambda_N;t) :=
{1 \over C_{N,a}^{(\rm fL)}}  \delta \Big (t -  \sum_{l=1}^N \lambda_l \Big ) \prod_{l=1}^N \lambda_l^a \prod_{1 \le j < k \le N} (\lambda_k - \lambda_j)^2.
\end{equation}
The corresponding Laplace-Fourier transform is
\begin{equation}\label{0.3bt}
\hat{p}_{N,a}^{(\rm fL)}(\lambda_1,\dots,\lambda_N;s) :=
{1 \over C_{N,a}^{(\rm fL)}} \prod_{l=1}^N e^{-s \lambda_l} \lambda_l^a \prod_{1 \le j < k \le N} (\lambda_k - \lambda_j)^2,
\end{equation}
which up to a scaling of the eigenvalues, and the normalisation, is just the eigenvalue PDF for the Laguerre
unitary ensemble (\ref{0.2}).

Now set
\begin{equation}\label{0.3c}
\rho_{(1)}^{(\rm fL)}(x;t) = N \int_I d\lambda_2 \cdots \int_I d\lambda_N \, p_N^{(\rm fL)}(x,\lambda_2,\dots,\lambda_N;t).
\end{equation}
Taking the Laplace-Fourier transform, making use of (\ref{0.3bt}), shows
\begin{align*}
\hat{\rho}_{(1)}^{(\rm fL)}(x;s)  & = N \int_0^\infty d \lambda_2 \cdots \int_0^\infty d \lambda_N \, \hat{p}_{N,a}^{(\rm fL)}(x,\lambda_2,\dots,\lambda_N;s) \\
& =  { C_{N,a}^{(\rm L)} \over C_{N,a}^{(\rm fL)} }
{1 \over s^{N a + N^2 - 1}} {\rho}_{(1)}^{(\rm  L)} (sx).
\end{align*}
Taking the inverse transform and setting $t=1$ then implies
\begin{align}\label{0.3d}
{\rho}_{(1)}^{(\rm fL)}(x) & =   { C_{N,a}^{(\rm L)} \over C_{N,a}^{(\rm fL)} } {1 \over 2 \pi i} \int_{c - i \infty}^{c + i \infty}
{  {\rho}_{(1)}^{(\rm  L)} (sx) \over s^{N a + N^2 - 1}} e^s \, ds, \quad c > 0 \nonumber \\
& =   { C_{N,a}^{(\rm L)} \over C_{N,a}^{(\rm fL)} } x^{N a + N^2 - 2}   {1 \over 2 \pi i} \int_{\tilde{c} - i \infty}^{\tilde{c} + i \infty}
{  {\rho}_{(1)}^{(\rm  L)} (s) \over s^{N a + N^2 - 1}} e^{s/x} \, ds, \quad \tilde{c} > 0.
\end{align}
We remark that the use of the Laplace-Fourier transform as outlined above can be found in nearly all\footnote{An exception is \cite{ACV18}, which uses a Gaussian approximation to the delta function.}
analytic studies relating to the fixed trace Laguerre ensemble, or indeed general random matrix
ensembles with a fixed trace condition; see \cite{AV00,ATK11,KP11a,CLZ10,Vi11,AV11,AC18,Ku19,FK19} as a selection of such studies.

The relationship (\ref{0.3d}) allows the differential equations (\ref{Aa1X}) and (\ref{u2}) to be linked.

\begin{prop}
The fact that $\rho_{(1)}^{(\rm L)}(s)$ in (\ref{0.3d}) satisfies the differential equation (\ref{u2}) (with $x$ replaced
by $s$ throughout)
implies that $\rho_{(1)}^{(\rm fL)}(x)$ satisfies (\ref{Aa1X}).
\end{prop}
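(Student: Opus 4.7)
The plan is to exploit the identity (\ref{0.3d}), which presents $\rho_{(1)}^{(\rm fL)}(x)$ as $x^{\alpha-1}$ (with $\alpha := Na + N^2 - 1$) times an inverse Laplace transform, in the variable $y = 1/x$, of $\rho_{(1)}^{(\rm L)}(s)/s^{\alpha}$. Via the standard Laplace dualities $s \leftrightarrow \partial_y$ and $\partial_s \leftrightarrow -y$ (whose compatibility is guaranteed by $[s,\partial_s] = -1 = [\partial_y, -y]$), the third order linear ODE (\ref{u2}) for $\rho_{(1)}^{(\rm L)}$ transfers to an ODE in $y$ for the inverse transform, which in turn becomes an ODE in $x$ for $\rho_{(1)}^{(\rm fL)}$ after the substitution $\partial_y = -x^2 \partial_x$ and conjugation by $x^{\alpha-1}$.

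I would carry this out in three algebraic steps. First, write $\rho_{(1)}^{(\rm L)}(s) = s^\alpha \tilde H(s)$ with $\tilde H = \mathcal L[H]$ and substitute into (\ref{u2}); clearing powers of $s$ produces a third order ODE $\tilde L_s \tilde H(s) = 0$ with polynomial coefficients depending on $\alpha$. Second, translate this to an ODE $L_y H(y) = 0$ through the substitutions $s \mapsto \partial_y$, $\partial_s \mapsto -y$, respecting ordering; boundary contributions at $y = 0$ from $\mathcal L[H^{(n)}](s) = s^n \tilde H(s) - s^{n-1}H(0) - \cdots$ vanish for the relevant $n$ because $\tilde H$ decays like $s^{-\alpha}$ at infinity. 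Third, change variables to $x = 1/y$ and conjugate back by $x^{\alpha-1}$; the resulting third order operator in $x$ should coincide, up to an overall nonvanishing multiplier, with the one in (\ref{Aa1X})--(\ref{Aa1Y}).

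The main obstacle is the algebraic bookkeeping in the first and third steps, which must reproduce the particular combinations $N^2 + Na - k$ for $k \in \{2,3,4\}$, $2N + a$, and $a^2$ appearing in (\ref{Aa1Y}). A more hands on alternative avoiding the Laplace duality is to apply the operator (\ref{Aa1X}) directly under the integral sign in (\ref{0.3d}): the derivatives in $x$ are computed using $\partial_x e^{s/x} = -(s/x^2) e^{s/x}$, producing an integrand of the form $\rho_{(1)}^{(\rm L)}(s) s^{-\alpha} e^{s/x}$ times a polynomial in $s$ and $1/x$. Repeated integration by parts in $s$, whose boundary terms on the vertical contour vanish because $|e^{s/x}|$ is bounded there and $s^{-\alpha}$ decays algebraically, converts positive powers of $s$ into $s$-derivatives of $\rho_{(1)}^{(\rm L)}$; the required identity is that the resulting integrand is a scalar multiple of $s^{-\alpha} e^{s/x}$ times the left-hand side of (\ref{u2}), which vanishes.
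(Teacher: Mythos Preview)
Your proposal is correct and follows essentially the same route as the paper: multiply the ODE (\ref{u2}) by the Bromwich kernel $e^{s/x}/s^{Na+N^2-1}$, integrate along the vertical contour, and trade powers of $s$ and $s$-derivatives via integration by parts, using the key identity that multiplication by $s$ inside the integral corresponds to the operator $-x^2\,\partial_x$ (your $\partial_y$ with $y=1/x$) acting outside. The only cosmetic differences are that the paper works throughout with $\rho_{(1)}^{(\rm L)}(s)$ itself and postpones the conjugation by $x^{Na+N^2-2}$ to the very end, whereas you first strip off $s^\alpha$ and phrase the transfer in the abstract Laplace-duality language $s\leftrightarrow\partial_y$, $\partial_s\leftrightarrow -y$; and that your ``hands-on alternative'' runs the same computation in reverse (applying (\ref{Aa1X}) under the integral and reducing to (\ref{u2})), which is logically equivalent since the target ODE is given.
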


\begin{proof}
Begin with the differential equation (\ref{u2}), and replace $x$ 
by $s$ throughout. Multiply through by $e^{s/x} / s^{N a + N^2 - 1} $ and integrate
over $s$ in the complex plane from $\tilde{c} - i \infty$ to $\tilde{c} + i \infty$, $ \tilde{c} > 0$.
Setting
$$
g_p(x) =  \int_{\tilde{c} - i \infty}^{\tilde{c} + i \infty} {
s^p e^{s/x} \over s^{N a + N^2 - 1}}  \rho_{(1)}^{(\rm L)}(s) \, ds
$$
and considering for example this procedure applied to the first term in
(\ref{u2}), upon 
integration by parts  we obtain the identity
\begin{multline*}
 \int_{\tilde{c} - i \infty}^{\tilde{c} + i \infty} {
s^3 e^{s/x} \over s^{N a + N^2 - 1} } {d^3 \over d s^3}  \rho_{(1)}^{(\rm L)}(s) \, ds  \\
 =
- (2 - aN -N^2) (3 - aN - N^2) (4 - aN - N^2) g_0(x) - {1 \over x^3} g_3(x)    \\
  - {3 \over x^2} (4 - aN - N^2)  g_2(x) 
  - {3 \over x} (3 - aN - N^2) (4 - aN - N^2) g_1(x).
\end{multline*}
This becomes a differential identity for $g_0(x)$ upon noting
\begin{equation}\label{gg}
g_p(x) = \Big (-x^2 {d \over dx} \Big )^p g_0(x).
\end{equation}
The above procedure applied to each of the terms in (\ref{u2}) similarly gives
rise to further differential identities for $g_0(x)$. Adding them together tells us that $g_0(x)$ satisfies
a particular third order linear differential equation.

Writing $g_0(x) = h(x)/x^{Na+N^2-2}$, the latter transforms to a third order linear differential equation
for $h(x)$, which according to (\ref{0.3d}) is proportional to ${\rho}_{(1)}^{(\rm fL)}(x)$.
Writing the transformed third order linear differential equation in the form 
(\ref{Aa1X}), we see that the values of the coefficients (\ref{Aa1Y}) result (the necessary simplification
is best done using computer algebra).
\end{proof}

\subsection{Matrix differential equation viewpoint}\label{S2.2}
In addition to $\rho_{(1)}^{(\rm L)}(x)$ satisfying a third order linear
scalar differential equation, it also satisfies a first order $3 \times 3$
matrix differential equation. This is a fact known explicitly for the
density in the cases of the Gaussian unitary ensemble 
(\ref{0.1})
and Jacobi
unitary ensemble (up to normalisation the eigenvalue PDF of the latter is given
by (\ref{0.1}) with $e^{-\lambda_l^2}$ replaced by
$\lambda_l^a (1 - \lambda_l)^b$, $0 < \lambda_l < 1$). Here we will first make the matrix
differential equation explicit. We will then show that this implies that 
$\rho_{(1)}^{(\rm fL)}(x)$ also satisfies a first order $3 \times 3$
matrix differential equation, and that this in turn implies the third order linear
scalar differential equation (\ref{Aa1X}).

The matrix differential equation relating to $\rho_{(1)}^{(\rm L)}(x)$ is actually
a special case of a more general matrix differential equation applying to the
density of the Laguerre $\beta$-ensemble, the latter characterised by having an
eigenvalue PDF proportional to
\begin{equation}\label{0.2a+}
 \prod_{l=1}^N \lambda_l^{  a}  e^{- \beta \lambda_l/2} \prod_{1 \le j < k \le N} |\lambda_k - \lambda_j|^\beta, 
\end{equation}
which each $\lambda_l \in \mathbb R^+$.
It is known from
\cite{Fo94b} (see also \cite[\S 13.2.5]{Fo10}) that for $\beta$ even
\begin{equation} \label{eq:r1}
	\rho_{N,\beta}^{(\rm L)} (x) = N { W_{ 2a/\beta+2, \beta, N-1 } \over W_{2a/\beta, \beta, N} }
	x^a e^{-\beta x/2} \,_1 F_1^{(\beta/2)}(-N+1; 2a/\beta +2; (x)^\beta ),  
\end{equation}
where $\,_1 F_1^{(\beta/2)}$ --- technically a particular multidimensional hypergeometric function based on on Jack polynomials \cite[\S 13.1.1]{Fo10}--- has the multi-dimensional integral representation
\begin{multline} \label{eq:r2}
	\,_1 F_1^{(\beta/2)}(-N+1; 2a/\beta +2; (x)^\beta ) =
	  {1 \over M_\beta (2a/\beta + 2/\beta -1, N-1, 2/\beta)}   {1 \over (2 \pi )^N} 
	\int_{- \pi }^{\pi} d \theta_1 \cdots
	\\
	\times  \int_{-\pi}^{\pi} d\theta_\beta \, \prod_{l=1}^\beta e^{ i \theta_l (2a/\beta + 2/\beta -1)}
	(1+e^{- i \theta_l})^{N + 2a/\beta + 2/\beta -2} e^{-x e^{i  \theta_l} }
	\prod_{1 \leq j<k \leq \beta} | e^{i \theta_k} -e^{ i \theta_j}|^{4/\beta}.
\end{multline}
The quantities $W_{\alpha, \beta, n}$ and $M_n(\lambda_1,\lambda_2,\tau)$ are
normalisations with known gamma function evaluations --- their precise form plays no
role in specifying differential equations --- also $(x)^\beta$ is a multivariate
notation, indicating that the argument $x$ is repeated $\beta$ times.

To specify the matrix differential equation, knowledge of a differential-difference system satisfied by
(\ref{eq:r2}) is required. As a preliminary for this purpose,
let $e_p(y_1,\dots,y_N) $ denote the elementary symmetric polynomials in $\{ y_j \}_{j=1}^N$, and
define
\begin{multline}\label{5.1}
J_{p,N,R}^{(\alpha)}(x) = {1 \over C_p^N}\int_R dt_1 \cdots  \int_R dt_N \,
\prod_{l=1}^N t_l^{\lambda_1} (1 - t_l)^{\lambda_2} (x - t_l)^\alpha \\
\times \prod_{1 \le j < k \le N} | t_k - t_j|^{2 \tau} 
e_p(x - t_1,\dots, x - t_N),
\end{multline}
where $R$ is a contour in the complex plane such that the integrand
vanishes at the endpoint, e.g.~$R=[0,1]$ or $R=[x,1]$ when the
parameters $\lambda_1, \lambda_2, \alpha$ are all positive.
The symbol $C_p^N$ denotes the binomial coefficient $N$ choose
$p$.
It is known that this  family of multiple integrals satisfies the
differential-difference system \cite{Fo93}, \cite[\S 4.6.4]{Fo10}
\begin{equation}\label{5.1a}
(N - p) E_p J_{p+1}(x) 
= (A_p x + B_p) J_p(x) - x(x-1) {d \over dx} J_p(x) + D_p x ( x - 1) J_{p-1}(x),
\end{equation}
valid for $p=0,1,\dots,N$.
Here we have abbreviated $J_{p,N,R}^{(\alpha)}(x) =: J_p(x)$, and the coefficients
are specified by
\begin{align*}
A_p & = (N-p) \Big ( \lambda_1 + \lambda_2 + 2 \tau  (N - p - 1) + 2(\alpha + 1) \Big ) \\
B_p & = (p-N)  \Big ( \lambda_1 + \alpha + 1 +  \tau (N - p - 1) \Big ) \\
D_p & = p \Big ( \tau (N-p) + \alpha + 1 \Big ) \\
E_p & = \lambda_1 + \lambda_2 + 1 +  \tau (2N - p - 2) + (\alpha + 1).
\end{align*}
This differential-difference system is in fact
equivalent to a  matrix differential equation \cite{FR12}; see below. The latter, in the special case $\tau = 1/2$, were
first isolated in the literature in \cite{Da72}; for recent developments see \cite{RF19,FK22,FK24}.

For $\tau$ a positive integer the absolute value signs in the product over differences in (\ref{5.1}) can be
removed. Suppose we choose $R$ to be the unit circle in the complex plane --- this is valid for
$\lambda_2 > 0$ since the integrand vanishes at the (single) endpoint $t_l=1$ --- and parametrise by
writing $t_l = e^{i \theta_l}$. Simple manipulation then gives that the integrand in (\ref{5.1}) transforms to
being proportional to
\begin{equation}\label{5.1b}
\prod_{l=1}^N e^{i \theta_l ( \lambda_1 + 1 +  \tau (N-1) + \alpha)} (1 - e^{i \theta_l})^{\lambda_2} (1 - x e^{- i \theta_l})^\alpha 
 \prod_{1 \le j < k \le N} | e^{i \theta_k} - e^{i \theta_j} |^{2 \tau} e_p(x - e^{i \theta_1}, \dots,
x - e^{i \theta_N}).
\end{equation}
As noted in \cite[Appendix A]{FK20}, direct use of integration by parts, starting with the integrand (\ref{5.1b})
in (\ref{5.1}), shows that the recurrence (\ref{5.1a}) in the case of (\ref{5.1b}) remains valid for general $\tau  > 0$.
We can now specify a differential-difference recurrence for a class of multiple integrals of the type appearing in
(\ref{eq:r2}).

\begin{prop}
Set
\begin{multline}\label{5.1c}
\tilde{J}_{p,N}(x) = {1 \over C_p^N}\int_{-\pi}^\pi d \theta_1 \cdots  \int_{-\pi}^\pi d \theta_N \,
\prod_{l=1}^N e^{i \tilde{\lambda}_1 \theta_l} (1 + e^{-i \theta_l})^{\lambda_2} e^{x e^{i \theta_l}} \\
\times \prod_{1 \le j < k \le N} |   e^{i \theta_k} - e^{i \theta_j}    |^{2 \tau} 
e_p( e^{-i \theta_1}    ,\dots,  e^{-i \theta_N}).
\end{multline}
Abbreviating $\tilde{J}_{p,N}(x) = \tilde{J}_{p}(x)$, for $p=0,1,\dots,N$
this multiple integral satisfies the differential-difference recurrence
\begin{equation}\label{5.1a+}
(N - p)   \tilde{E}_p  \tilde{J}_{p+1}(x) 
= (N - p) ( x  + \tilde{\lambda}_1 + p \tau)  \tilde{J}_p(x)  + x {d \over dx}  \tilde{J}_p(x)  - p x  \tilde{J}_{p-1}(x),
\end{equation}
where
\begin{equation}\label{E+}
\tilde{E}_p = - \tilde{\lambda}_1 + \lambda_2 + \tau(N - p - 1) + 1.
\end{equation}

\end{prop}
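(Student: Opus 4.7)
The plan is to derive (\ref{5.1a+}) as a limiting case of the recurrence (\ref{5.1a}) applied to the unit-circle integrand (\ref{5.1b}), which the paragraph preceding the proposition states remains valid for general $\tau > 0$. First I would transform (\ref{5.1b}) into the integrand of (\ref{5.1c}) by a three-step procedure: (i) substitute $\theta_l \mapsto \pi - \theta_l$ for each $l=1,\dots,N$, using $e^{i(\pi-\theta_l)} = -e^{-i\theta_l}$. This leaves the Vandermonde factor invariant (since $|e^{-i\theta_k} - e^{-i\theta_j}| = |e^{i\theta_k} - e^{i\theta_j}|$), converts $(1 - e^{i\theta_l})^{\lambda_2} \mapsto (1 + e^{-i\theta_l})^{\lambda_2}$ and $(1 - xe^{-i\theta_l})^\alpha \mapsto (1 + xe^{i\theta_l})^\alpha$, maps $e_p(x - e^{i\theta_1},\dots,x-e^{i\theta_N}) \mapsto e_p(x + e^{-i\theta_1},\dots,x+e^{-i\theta_N})$, and reverses the sign of the phase prefactor up to an overall constant; (ii) reparametrise $\lambda_1 = -(\tilde{\lambda}_1 + 1 + \tau(N-1) + \alpha)$ so that $e^{-i\theta_l(\lambda_1+1+\tau(N-1)+\alpha)} = e^{i\tilde{\lambda}_1 \theta_l}$, absorbing the $\alpha$-growth of the remaining phase; (iii) rescale $x \mapsto x/\alpha$ and send $\alpha \to \infty$, using the uniform limits $(1 + (x/\alpha) e^{i\theta_l})^\alpha \to e^{xe^{i\theta_l}}$ and $e_p(x/\alpha + e^{-i\theta_l},\dots) \to e_p(e^{-i\theta_l},\dots)$. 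The result is $\tilde{J}_{p,N}(x)$ up to an overall $p$-independent constant.

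Second, I would apply exactly the same substitution, reparametrisation and limit to each coefficient of (\ref{5.1a}). A short algebraic check gives $A_p(x/\alpha) + B_p \to (N-p)(x + \tilde{\lambda}_1 + p\tau)$ (the leading $O(\alpha)$ contributions to $A_p (x/\alpha)$ and to $B_p$ combine and the subleading pieces vanish), while $-x(x-1)\,d/dx$ becomes $+x\,d/dx$ (using $d/dx \to \alpha\,d/dx$ and $(x/\alpha)(x/\alpha-1) \to -x/\alpha$), $D_p\,x(x-1) \to -px$, and $E_p = \tilde{E}_p$ already as an identity under the reparametrisation, with no limit needed. Assembling these gives (\ref{5.1a+}).

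The main obstacle is bookkeeping: verifying that the limit in step (iii) may be exchanged with the integration (which follows from uniform convergence of the integrand on the compact torus), and confirming that the $p$- and $x$-independent phase constants picked up under $\theta_l \mapsto \pi - \theta_l$ and the $\alpha$-dependent reparametrisation of $\lambda_1$ cancel uniformly across $\tilde{J}_{p-1},\tilde{J}_p,\tilde{J}_{p+1}$ in the recurrence (they do, since they multiply each $\tilde{J}_p$ identically). A parallel alternative that avoids the limit entirely is to mimic the integration-by-parts argument of \cite[Appendix A]{FK20} that derives (\ref{5.1a}) directly from (\ref{5.1b}), applied term-by-term to the integrand (\ref{5.1c}) by differentiating in each $\theta_l$; periodicity supplies the vanishing boundary terms (formally for $\tilde{\lambda}_1 \in \mathbb{Z}$, and by analytic continuation in general), and the relations satisfied by $e_p$ under such differentiation produce the shifts $\tilde{J}_p \to \tilde{J}_{p\pm 1}$.
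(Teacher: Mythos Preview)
Your argument is correct and is essentially the paper's own proof: the paper obtains the integrand (\ref{5.1c}) from (\ref{5.1b}) by taking the complex conjugate and then shifting $\theta_l \mapsto \theta_l - \pi$ (which together amount to your $\theta_l \mapsto \pi - \theta_l$), followed by the same reparametrisation $-\tilde{\lambda}_1 = \lambda_1 + 1 + \tau(N-1) + \alpha$, the scaling $x \mapsto x/\alpha$, and the limit $\alpha \to \infty$ applied to (\ref{5.1a}). Your explicit coefficient checks and the remark on the integration-by-parts alternative are accurate elaborations of what the paper sketches.
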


\begin{proof}
Up to a sign independent of $p$, the integrand in (\ref{5.1c}) follows from (\ref{5.1b}) by taking the complex
conjugate and setting
$$
 \theta_l \mapsto \theta_l - \pi, \quad - \tilde{\lambda}_1 = \lambda_1 + 1 + \tau(N - 1) + \alpha, \quad x \mapsto x/\alpha,  \quad \alpha \to \infty.
$$
Applying the last three of these to (\ref{5.1a}) gives (\ref{5.1a+}).

\end{proof}

Comparing (\ref{5.1c}) with (\ref{eq:r1}) in the case $\beta = 2$, we see that our immediate
interest in (\ref{5.1a+}) is for the choice of parameters  and substitutions
$$
N = 2, \quad \tau = 1, \quad x \mapsto - x, \quad \tilde{\lambda}_1 = a, \quad
\lambda_2 = N + a - 1, \quad \hat{J}_p(x) = x^a e^{-x} \tilde{J}_p(-x),
$$
when it reads
\begin{equation}\label{5.1a2}
(2 - p)  (N+1-p)  \hat{J}_{p+1}(x) 
=  ((1-p)(-x+a) + (2-p)p)  \hat{J}_p(x)  + x {d \over dx}  \hat{J}_p(x)  + p x  \hat{J}_{p-1}(x).
\end{equation}
This recurrence can be written in the equivalent matrix form
\begin{equation}\label{5.1a3}
x {d \over d x} \begin{bmatrix} \hat{J}_0(x) \\ \hat{J}_1(x)  \\  \hat{J}_2(x) \end{bmatrix} =
\begin{bmatrix} x - a & 2 (N + 1) & 0 \\
-x & - 1 & N \\
0 & - 2x  &  - x + a \end{bmatrix}
 \begin{bmatrix} \hat{J}_0(x) \\ \hat{J}_1(x)  \\ \hat{J}_2(x) \end{bmatrix}.
 \end{equation}
 The equation implied by the first row can be used to relate $\hat{J}_1(x)$ to $\hat{J}_0(x)$; 
 with this knowledge, the second equation allows $\hat{J}_2(x)$ to be written in
 terms of $\hat{J}_0(x)$. We can check that substituting in the equation implied by
 the third row gives that $\hat{J}_0(x)$ satisfies the third order scalar differential equation (\ref{u2}),
 as is consistent with the fact that $\hat{J}_0(x)$ is proportional to $\rho_{(1)}^{\rm (L)}(x)$.
 
 Now in (\ref{5.1a3}) replace $x$ by $s$ throughout. Multiply both sides by
 $e^{s/x}/s^{Na + N^2 - 1}$ $(0<x<1)$ and integrate with respect to $s$ over the contour
 $c - i \infty$ to $c + i \infty$. To do this, make use of integration by parts on the left hand
 side, and on the right hand side first decompose the matrix into a sum of one
 proportional to $s$, and one independent of $s$. After making use of  the analogue of
 (\ref{gg}) in the case $p=1$, this procedure gives a first order matrix differential equation for the column
 vector
 \begin{equation}\label{2.19}
 \bigg [ \int_{c - i \infty}^{c + i \infty}  {e^{s/x} \over s^{Na + N^2 - 1}} 
 \hat{J}_p(s) \, ds \bigg ]_{p=0,1,2}.
  \end{equation}
 Multiply (\ref{2.19}) by $x^{Na + N^2 - 1}$, and denote the resulting column vector by $[ \hat{I}_p(x) ]_{p=0,,1,2}$.
 The matrix differential equation for (\ref{2.19}) then transforms to read
 \begin{equation}\label{2.19a}
 x {d \over d x}  \begin{bmatrix} \hat{I}_0(x) \\ \hat{I}_1(x)  \\ \hat{I}_2(x) \end{bmatrix}  =
 \Big ( (N a + N^2 - 2) x - x^2 {d \over d x} \Big )
  \begin{bmatrix}  1 & 0 & 0 \\ - 1 & 0 & 0 \\ 0 & -2 & - 1
  \end{bmatrix}  \begin{bmatrix} \hat{I}_0(x) \\ \hat{I}_1(x)  \\ \hat{I}_2(x) \end{bmatrix} +
   \begin{bmatrix} -a & 2 (N+1) & 0 \\ 0 & -1 & N \\ 0 & 0 & a
   \end{bmatrix}   \begin{bmatrix} \hat{I}_0(x) \\ \hat{I}_1(x)  \\ \hat{I}_2(x) \end{bmatrix}.
   \end{equation}
   We now follow the  strategy detailed below (\ref{5.1a3}) to deduce a differential
   scalar equation for  $\hat{I}_0(x)$. This can be checked to be the same differential equation
   as (\ref{Aa1X}). Since by definition
   $\hat{I}_0(x)$ is proportional to $\rho_{(1)}^{\rm (fL)}(x)$ --- recall (\ref{0.3d}),
   the definition of $ \hat{I}_0(x)$ below
   (\ref{2.19}), and the fact that
   $\hat{J}_0(x)$ is proportional to $\rho_{(1)}^{\rm (L)}(x)$ --- this gives an alternative derivation of the latter.
   
   \subsection{A comparison of properties of $\rho_{(1)}^{\rm (L)}(x)$ and $\rho_{(1)}^{\rm (fL)}(x)$}\label{S2.3}
   Substituting (\ref{0.2}) in (\ref{u1}) shows that the density of the LUE has the structure
   \begin{equation}\label{ds1}
   \rho_{(1)}^{\rm (L)}(x) = x^a e^{-x} \sum_{p=0}^{2 (N - 1)} \alpha_p x^p
   \end{equation}
   for some coefficients $\{ \alpha_p \}$. As noted in \cite[Remark 2.1.3]{RF19} the differential
   equation (\ref{u2}) admits a unique solution with the $x \to \infty$ form compatible with
   (\ref{ds1})
 \begin{equation}\label{ds1a}
   \rho_{(1)}^{\rm (L)}(x) = { N C_{N-1,a}^{\rm (L)} \over C_{N,a}^{\rm (L)} } x^a e^{-x} 
   \Big ( x^{2 (N - 1)} + O(x^{2 (N - 1)-1}) \Big ),
   \end{equation}  
and  so in particular implies a recursion for $\{ \alpha_p \}$, with a unique solution
upon requiring  that
 \begin{equation}\label{ds1a+}
 \alpha_{2(N-1)} =   { N C_{N-1,a}^{\rm (L)} \over C_{N,a}^{\rm (L)} }.
 \end{equation}  
Substituting (\ref{ds1}) in (\ref{0.3d}) shows that in terms of $\{ \alpha_p \}$,
 \begin{equation}\label{ds1b}
   \rho_{(1)}^{\rm (fL)}(x) = {  C_{N,a}^{\rm (L)} \over C_{N,a}^{\rm (fL)} }
   \sum_{p=0}^{2 (N - 1)} \Gamma((N-1)a + N^2 - 2 - p + 1) \alpha_p
   \Big ( {1 \over x} - 1 \Big )^{(N-1)a + N^2 - 2 - p},
  \end{equation}   
  supported on $(0,1)$.  
  This indicates that the natural independent variable associated with (\ref{Aa1Y})
  is $y = (1/x)-1$, and that in this variable the differential equation admits
  a unique solution for $y \to \infty$ subject only to the specification of the
  leading term as made explicit by (\ref{ds1a+}).
  
  Consider next the moments associated with the densities
  $\rho_{(1)}^{\rm (L)}(x)$ and $\rho_{(1)}^{\rm (fL)}(x)$,
 \begin{equation}\label{ds1c}  
 m_p^{(\rm L)} = \int_0^\infty x^p \rho_{(1)}^{\rm (L)}(x) \, dx, \qquad
 m_p^{(\rm fL)} = \int_0^1 x^p \rho_{(1)}^{\rm (fL)}(x) \, dx.
 \end{equation}
 We know from \cite{HT03} and \cite{Le04} that $\{ m_p^{(\rm L)} \}$ satisfies the
 second order recurrence
  \begin{equation}\label{ds1d}    
  (k + 1) m_k^{(\rm L)} = (2k - 1) ( a + 2N ) m_{k-1}^{(\rm L)} +
  (k - 2) ( (k - 1)^2 - a^2 )  m_{k-2}^{(\rm L)} ,
  \end{equation}
  which was shown in \cite{CMSV16} to be valid for general $k > - a - 1$.
  One way to derive (\ref{ds1d}) is to multiply both sides of the
  differential equation (\ref{u2}) by $ x^{k-2}$, integrate both
  sides from $0$ to $\infty$, and simplify making use of integration by
  parts.  A formula for $m_k^{(\rm L)}$ in terms of certain continuous
  Hahn polynomials of degree $N-1$ and valid for continuous $k$
  has been given in \cite{CMOS18}.

Making use of the first equality in (\ref{0.3d}), as well as 
the explicit evaluations of the normalisation (\ref{0.2a}) and
(\ref{0.3a}), we can deduce \cite{ZS01}
 \begin{equation}\label{ds1e}    
 m_p^{(\rm fL)} = {1 \over (N^2+Na)_p}  m_p^{(\rm L)} ,
  \end{equation}
  where $(u)_p$ denotes the rising Pochhammer symbol.
  Substituting (\ref{ds1e}) in (\ref{ds1d}) gives that
  $\{ m_p^{(\rm fL)} \}$ satisfies the
 second order recurrence
  \begin{multline}\label{ds1eq}    
  (k + 1)(N^2 +  Na + k)(N^2 +  Na + k-1) m_k^{(\rm fL)}  \\
  = (2k - 1) ( a + 2N )(N^2 +  Na + k-1) m_{k-1}^{(\rm fL)} +
  (k - 2) ( (k - 1)^2 - a^2 )  m_{k-2}^{(\rm fL)}.
  \end{multline}
 Starting with the differential equation (\ref{Aa1Y}), and following
 the procedure noted below (\ref{ds1d}) gives an alternative derivation of
 (\ref{ds1eq}).
 
 We see from (\ref{0.2a}) that
 \begin{align*}
 C_{N,a}^{(\rm L)} & = \int_0^\infty d \lambda_1 \cdots  \int_0^\infty d \lambda_N \,
  \prod_{l=1}^N \lambda_l^a e^{- \lambda_l} \prod_{1 \le j < k \le N} (\lambda_k - \lambda_j)^2 \\
  & = t^{N^2 + Na} \int_0^\infty d \lambda_1 \cdots  \int_0^\infty d \lambda_N \,
  \prod_{l=1}^N \lambda_l^a e^{- t \lambda_l} \prod_{1 \le j < k \le N} (\lambda_k - \lambda_j)^2.
  \end{align*}
  Multiplying through by $1/( C_{N,a}^{(\rm L)}  t^{N^2 + N a})$, differentiating both sides with respect to $t$, and setting
  $t=1$ shows that
    \begin{equation}\label{ds1f}   
  m_1^{(\rm L)} = N^2 + N a.
  \end{equation}
  Since from the definition (\ref{0.3})
  \begin{equation}\label{ds1g}   
  m_1^{(\rm fL)} = 1
   \end{equation}
   we see that (\ref{ds1f}) is consistent with (\ref{ds1e}).
   
   The global scaling regime of positive definite matrices seeks a rescaling of
   the eigenvalues so that the first moment is of order $N$ as $N \to \infty$
   (as a normalisation we will require that the leading value be exactly $N$).
   Let us also scale the parameter $a$ by writing $a = \alpha N$. For the LUE, it
   follows from (\ref{ds1f}) that global scaling can  can be achieved by replacing each $\lambda_l$
   by $\tilde{\lambda}_l N $; to match with the first (scaled) first moment for the
   fLUE we should replace each $\lambda_l$
   by $\tilde{\lambda}_l /(N(1 + \alpha))  $. Thus in (\ref{Aa1Y}) we should make the 
   substitution  $x=yN$, and in (\ref{u2}) the substitution $x = y/(N(1 + \alpha))$. Doing this,
   and equating terms at leading order in $N$ gives in both cases the same
   first order linear differential equation
    \begin{equation}\label{d7}
    \Big ( -(  y^2 - 2( \alpha + 2) y + \alpha^2) y {d \over dy} + (\alpha + 2) y - \alpha^2 \Big )
    f(y) = 0.
    \end{equation}
    The solution of this, which is to be non-negative and normalised to integrate to unity,
    is the well known Marchenko-Pastur law
   \begin{equation}\label{d8}   
   f(y) = {1 \over 2 \pi y} \Big ((y - \alpha_-)(\alpha_+ - y) \Big )^{1/2}, \quad \alpha_\pm = ( \sqrt{\alpha+1} \pm 1)^2,
     \end{equation}
     supported on $y \in (\alpha_-,\alpha_+)$. This is consistent with results known from 
     \cite{Pa93,NMV10}.
     
     \section{Moments and cumulants for the purity and related statistics}
     \subsection{Cumulants of the purity and $\sigma$-Painlev\'e IV}\label{S3.1}
     We begin this section by recalling from \cite{FW00}, \cite[Ch.~8]{Fo10} the precise relationship
     between the GUE average in (\ref{G1}) and a transcendent in the Hamiltonian formulation of
     Painlev\'e IV \cite{Ok86}. Thus we have
  \begin{equation}\label{B1}
  {d \over dt} \log  \Big \langle    \prod_{j=1}^N( \lambda_j - t )^a \chi_{\lambda_j > t }  \Big \rangle^{\rm (G)}  = U_N(t,a),
  \end{equation}       
where $U_N(t,a),$ satisfies the particular $\sigma$-Painlev\'e IV equation, which is a second order
nonlinear equation satisfied by the Hamiltonian itself,
 \begin{equation}\label{B2}
 ( \sigma'')^2 - 4( t \sigma' - \sigma)^2 +4 \sigma' (\sigma' - 2 a) (\sigma' + 2 N) = 0.
  \end{equation}     
  Being a logarithmic derivative which equals a Hamiltonian in the Hamiltonian formulation of
  a Painlev\'e system, the average is said to be a $\tau$-function; see e.g.~\cite[\S 8.2.3]{Fo10}.
  
  In view of (\ref{B1}), writing in (\ref{G1}) $1/(2\sqrt{s}) = t$ we have that
   \begin{equation}\label{B3}
   {d \over d t} \log \hat{P}_{N,a}^{(\rm L)}(1/(4 t^2)) = 2 N t + {N^2 + N a \over t} + U_N(t,a).
   \end{equation}
   Being the exponential generating function for the distribution  of the linear
   statistic $\sum_{j=1}^N \lambda_j^2$ in the LUE, and with
   $\{\kappa_n^{(\rm L)}\}$ denoting the corresponding cumulants, we have the expansion
  \begin{equation}\label{B4}   
 \log \hat{P}_{N,a}^{(\rm L)}(u)  = \sum_{n=1}^\infty {(-1)^n \kappa_n^{(\rm L)} \over n!} u^n,
 \end{equation}
and consequently $U_N(t;a)$ permits the large-$t$ expansion
  \begin{equation}\label{B5}  
U_N(t,a) = -2Nt - {N^2 + N a \over t}  + {1 \over 2 t^3} \sum_{n=0}^\infty {(-1)^n \kappa_{n+1}^{(\rm L)} \over 4^{n} n! t^{2n}}.
 \end{equation}
 
 \begin{prop}\label{p3.1}
 The cumulants $\{  \kappa_n^{(\rm L)} \}_{n=1}^\infty $ for the distribution of the LUE linear statistic $\sum_{j=1}^N \lambda_j^2$ 
are proportional to the coefficients in the $\{1/t^{2n+3} \}$ expansion $(n=0,1,\dots)$ of the $\sigma$-Painlev\'e IV transcedent
 $U_N(t,a)$ according to  (\ref{B5}). These coefficients follow by substituting (\ref{B5}) in (\ref{B2}) which gives for the first
 three cumulants
 \begin{align*}
 \kappa_1^{\rm (L)} & = N (N + a) (2 N + a) \\
  \kappa_2^{\rm (L)} & = 2 N (N + a) (1 + 2 a^2 + 9 a N + 9 N^2) \\
  \kappa_3^{\rm (L)} & =  8 N ( N + a) ( 2N + a) ( 10 + 5 a^2 + 27 a N + 27 N^2).
  \end{align*}
  \end{prop}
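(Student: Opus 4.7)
The plan is to prove this in two moves: first establish the claimed expansion formula (\ref{B5}) for $U_N(t,a)$ in terms of the LUE cumulants, then substitute it into the $\sigma$-Painlev\'e IV equation (\ref{B2}) and extract coefficients of powers of $1/t$ recursively.

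For the first part, the formula (\ref{B5}) is essentially forced by differentiating the cumulant expansion (\ref{B4}) with $u = 1/(4t^2)$ and matching to (\ref{B3}). Writing out
\begin{equation*}
\log \hat{P}_{N,a}^{(\rm L)}(1/(4t^2)) = \sum_{n=1}^\infty \frac{(-1)^n \kappa_n^{(\rm L)}}{n!}\, (4t^2)^{-n},
\end{equation*}
and using $\frac{d}{dt}(4t^2)^{-n} = -\tfrac{n}{2}\, 4^{-n} t^{-2n-1}$, a shift of index to $m=n-1$ gives
\begin{equation*}
\frac{d}{dt}\log \hat{P}_{N,a}^{(\rm L)}(1/(4t^2)) = \frac{1}{2t^3}\sum_{m=0}^\infty \frac{(-1)^m \kappa_{m+1}^{(\rm L)}}{4^m m!\, t^{2m}}.
\end{equation*}
Subtracting the elementary terms $2Nt + (N^2+Na)/t$ in accordance with (\ref{B3}) reproduces (\ref{B5}) exactly. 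This step also confirms that the purely algebraic prefactor $-2Nt - (N^2+Na)/t$ in (\ref{B5}) is not free: it is dictated by the explicit $e^{N/(4s)}$ and $(1/\sqrt{s})^{Na+N^2}$ factors in (\ref{G1}).

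For the second part, substitute the Laurent series (\ref{B5}) into the $\sigma$-Painlev\'e IV equation (\ref{B2}) and expand each of the three terms $(\sigma'')^2$, $4(t\sigma' -\sigma)^2$ and $4\sigma'(\sigma'-2a)(\sigma'+2N)$ as Laurent series in $1/t$. The leading $O(t^2)$ contributions cancel identically, which is equivalent to the statement that the prescribed linear and $1/t$ parts in (\ref{B5}) are correct. At each successive negative even power $t^{-2k}$ the unknown cumulant $\kappa_{k+1}^{(\rm L)}$ enters linearly (coming from the $(\sigma')^2$ term paired against the leading $-2Nt$ contribution), while every other appearance involves only lower cumulants $\kappa_1^{(\rm L)},\ldots,\kappa_k^{(\rm L)}$. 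This structure guarantees a unique recursive solution for the entire sequence $\{\kappa_n^{(\rm L)}\}$.

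The remaining work is purely computational: extracting the coefficients of $t^2$, $t^0$, $t^{-2}$, $t^{-4}$ and solving for $\kappa_1^{(\rm L)},\kappa_2^{(\rm L)},\kappa_3^{(\rm L)}$ yields the asserted polynomial expressions in $N$ and $a$. The main obstacle is not conceptual but bookkeeping: one has to handle the quadratic and cubic Cauchy products of the tail series accurately, and the calculation is naturally delegated to a computer algebra system. A useful consistency check is that $\kappa_1^{(\rm L)} = N(N+a)(2N+a)$ agrees independently with the first moment computation $m_2^{(\rm L)}$ derivable from the recurrence (\ref{ds1d}), since $\kappa_1^{(\rm L)} = \langle \sum_j \lambda_j^2\rangle^{(\rm L)} = m_2^{(\rm L)}$.
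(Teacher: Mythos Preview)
Your approach is exactly the one the paper takes: derive (\ref{B5}) by differentiating the cumulant generating function (\ref{B4}) at $u=1/(4t^2)$ and comparing with (\ref{B3}), then substitute (\ref{B5}) into (\ref{B2}) and equate Laurent coefficients recursively. One small slip: the intermediate derivative should read $\tfrac{d}{dt}(4t^2)^{-n} = -2n\cdot 4^{-n} t^{-2n-1}$ rather than $-\tfrac{n}{2}\cdot 4^{-n} t^{-2n-1}$, though your final displayed series is nonetheless correct.
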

  
  The cumulants fully determine the corresponding moments according to standard formulas
 \begin{equation}\label{cm}
   \hat{m}_1^{\rm (L)}  =  \kappa_1^{\rm (L)}, \quad   \hat{m}_2^{\rm (L)}  =   \kappa_2^{\rm (L)} + ( \kappa_1^{\rm (L)})^2, \quad
    \hat{m}_3^{\rm (L)}  =   \kappa_3^{\rm (L)} + 3  \kappa_2^{\rm (L)}  \kappa_1^{\rm (L)} + ( \kappa_1^{\rm (L)})^3.
\end{equation}
 The significance of knowledge of the moments $\hat{m}_n^{(\rm L)}$ is that they relate
 to the moments for the purity, $\hat{m}_n^{(\rm fL)}$ say, by a formula analogous to
 (\ref{ds1e}) \cite{VPO16}
\begin{equation}\label{ds1e+}    
 \hat{m}_p^{(\rm fL)} = {1 \over (N^2+Na)_{2p}}  m_p^{(\rm L)} .
  \end{equation}
Inverting the relations (\ref{cm}) then allows the cumulants $\kappa_n^{(\rm fL)}$ for the purity 
be computed.

\begin{cor}\label{C3.2}
The first three cumulants for the purity statistic $\sum_{j=1}^N \lambda_j^2$ associated with
the eigenvalue PDF (\ref{0.3}) are given by
\begin{align*}
 \kappa_1^{\rm (fL)} & =   {(2 N + a) \over (N(N+a) + 1)} \\
  \kappa_2^{\rm (fL)} & = {2 (N^2 - 1) ( (N+a)^2  - 1) \over  (N(N+a) + 1)^2 (N(N+a) + 2) (N(N+a) + 3) }\\
  \kappa_3^{\rm (fL)} & =  {8  (N^2 - 1) ( (N+a)^2  - 1) (2 N + a)  (N(N+a) - 5) \over
   (N(N+a) + 1)^3 (N(N+a) + 2) (N(N+a) + 3)  (N(N+a) + 4)  (N(N+a) + 5) }.
  \end{align*}
\end{cor}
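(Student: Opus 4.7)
The plan is to chain together three ingredients already assembled in the paper: the explicit LUE cumulants from Proposition \ref{p3.1}, the moment--cumulant relations (\ref{cm}), and the moment conversion formula (\ref{ds1e+}) between LUE and fLUE purity moments. First I would use (\ref{cm}) to turn $\kappa_1^{\rm (L)}, \kappa_2^{\rm (L)}, \kappa_3^{\rm (L)}$ into the first three raw moments $\hat{m}_1^{\rm (L)}, \hat{m}_2^{\rm (L)}, \hat{m}_3^{\rm (L)}$. Then applying (\ref{ds1e+}) with $p=1,2,3$ gives $\hat{m}_p^{\rm (fL)} = \hat{m}_p^{\rm (L)}/(N^2+Na)_{2p}$, where each denominator is a product of $2p$ consecutive rising factors of $N(N+a)$.

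Next I would invert the system (\ref{cm}) in the fLUE variables, namely
\begin{align*}
\kappa_1^{\rm (fL)} &= \hat{m}_1^{\rm (fL)}, \\
\kappa_2^{\rm (fL)} &= \hat{m}_2^{\rm (fL)} - (\hat{m}_1^{\rm (fL)})^2, \\
\kappa_3^{\rm (fL)} &= \hat{m}_3^{\rm (fL)} - 3 \hat{m}_2^{\rm (fL)} \hat{m}_1^{\rm (fL)} + 2(\hat{m}_1^{\rm (fL)})^3,
\end{align*}
and substitute the above expressions. The first cumulant simplifies immediately: $\hat{m}_1^{\rm (L)} = N(N+a)(2N+a)$ and $(N^2+Na)_2 = N(N+a)(N(N+a)+1)$, giving the claimed $(2N+a)/(N(N+a)+1)$.

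For $\kappa_2^{\rm (fL)}$ and $\kappa_3^{\rm (fL)}$, the arithmetic is mechanical but the algebraic simplification is the main obstacle: one must combine fractions with a common denominator $(N(N+a)+1)^k \prod_j (N(N+a)+j)$ and show that the massive polynomial in $N$ and $a$ that appears in the numerator factorises as $2(N^2-1)((N+a)^2-1)$ in the $\kappa_2$ case, and as $8(N^2-1)((N+a)^2-1)(2N+a)(N(N+a)-5)$ in the $\kappa_3$ case. I would carry out this simplification by setting $M := N(N+a)$ and $S := 2N+a$, noting that $N(N+a)(2N+a)$ and $N(N+a)$ occur naturally, and verifying the factorisation either by expanding both sides as polynomials in $N$ with $a$ as a parameter, or — more efficiently — by invoking computer algebra, which is already the norm elsewhere in the paper (e.g.\ in the proof of Proposition 1 of Section \ref{S2.1}). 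Finally, a sanity check at $a=0$ and small $N$ (say comparing against (\ref{d1a}) for $N=2$) pins down the overall constants and confirms the stated expressions.
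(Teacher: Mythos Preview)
Your proposal is correct and follows exactly the method the paper uses: it explicitly states just before Corollary~\ref{C3.2} that one passes from the LUE cumulants of Proposition~\ref{p3.1} to LUE moments via (\ref{cm}), then to fLUE moments via (\ref{ds1e+}), and finally inverts (\ref{cm}) to obtain $\kappa_n^{\rm (fL)}$. The paper does not display the intermediate algebra either, so your level of detail (including the appeal to computer algebra for the factorisations) is in line with its treatment.
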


The mean $ \kappa_1^{\rm (fL)}$ was first calculated in \cite{Lu78}. This was extended to the first
three cumulants in \cite{SC03}, and, using the sum formula (\ref{d1b}) for the moments, to the first
five cumulants in \cite{Gi07}. Although we have listed only the first three cumulants above, our
method also provides an efficient and systematic way to continue this list.

\subsection{The linear statistic $\sum_{k=1}^N \lambda_j^q$}\label{S3.2}
The linear statistic $T_q := \sum_{k=1}^N \lambda_j^q$ is closely related to the quantum Tsallis
entropy introduced in the paragraph including (\ref{Aa}), with the latter specified in terms of $T_q$ by
$\tilde{T}_q := (1/(1-q)) (1 - T_q)$. Here we will present the
analogue of the summation expression (\ref{d1b}) for the moments of $T_q$. Note that these
relate to the moments of $\tilde{T}_q$ by the simple formula
\begin{equation}\label{Lw}
( \tilde{T}_q)^k = \Big ( {1 \over 1 - q} \Big )^k \sum_{s=0}^k (-1)^s \binom{k}{s} (T_q)^s.
\end{equation}

 \begin{prop}\label{P3.2}
For all $r \ge 0$, $k \in \mathbb Z_{\ge 0}$ and $qk+a > - 1$ we have
\begin{multline}\label{d1bqs}
\Big \langle \Big (   \sum_{j=1}^N \lambda_j^q \Big )^k \Big \rangle^{\rm (fL)} =
{(N (N + a) - 1)! \over ( N(N + a) + qk - 1)!}  \\ \times \sum_{k_1,\dots,k_N \ge 0 \atop
k_1 + \cdots + k_N = k}
{k! \over k_1! \cdots k_N!} \prod_{i=1}^N
{(N + a + qk_i - i)! \over (N + a - i)! }
\prod_{1 \le i < j \le N} { q k_i - i - qk_j + j \over -i + j }.
\end{multline}
\end{prop}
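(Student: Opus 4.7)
The plan is to reduce the moment to the classical Schur-polynomial LUE average, using a Vandermonde-based rewriting of the multinomial expansion of $(\sum_l\lambda_l^q)^k$ that converts it into a sum of (generalised) Schur polynomials.

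First I would establish the polynomial identity
\begin{equation*}
\Big(\sum_{l=1}^N\lambda_l^q\Big)^k = \sum_{\substack{k_1,\ldots,k_N\ge 0\\ k_1+\cdots+k_N=k}}\binom{k}{k_1,\ldots,k_N}\,s_{(qk_1,\ldots,qk_N)}(\lambda_1,\ldots,\lambda_N),
\end{equation*}
where for arbitrary $m\in\Z_{\ge 0}^N$ we set $s_m(\lambda):=\det[\lambda_j^{m_i+N-i}]_{i,j=1}^N/\det[\lambda_j^{N-i}]_{i,j=1}^N$: this vanishes when two of the exponents $m_i+N-i$ coincide and otherwise equals $\pm s_\mu$ for the unique partition $\mu$ obtained by sorting. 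The verification is short: multiply both sides by $\det[\lambda_j^{N-i}]$, expand the determinant on the right as $\sum_\sigma\sgn(\sigma)\prod_j\lambda_j^{qk_{\sigma(j)}+N-\sigma(j)}$, and for each $\sigma$ reindex $(k_i)\mapsto(k_{\sigma(i)})$ — permissible by the symmetry of the multinomial coefficient. The $\sigma$-sum then factorises as $\det[\lambda_j^{N-i}]\cdot(\sum_l\lambda_l^q)^k$ by the multinomial theorem, matching the left-hand side after dividing out $\det[\lambda_j^{N-i}]$.

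Second, I would use the Schur-polynomial LUE average
\begin{equation*}
\langle s_m(\lambda)\rangle^{(\rm L)} = \prod_{i=1}^N\frac{(N+a+m_i-i)!}{(N+a-i)!}\prod_{1\le i<j\le N}\frac{m_i-m_j+j-i}{j-i},\qquad m\in\Z_{\ge 0}^N,
\end{equation*}
which is a direct consequence of Andreief's identity: the relation $s_m(\lambda)\cdot\det[\lambda_j^{N-i}]=\det[\lambda_j^{m_i+N-i}]$ turns the LUE integral of $s_m$ into $N!\det[\Gamma(m_i+N+a-i+j)]_{i,j=1}^N$, and row-extracting $\Gamma(m_i+N+a-i+1)$ reduces the latter to the Vandermonde in the variables $m_i+N+a-i$, producing the displayed product form. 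Both sides consistently vanish when these shifted exponents are degenerate.

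Combining steps 1 and 2 yields the LUE analogue of (\ref{d1bqs}) directly. Because $(\sum_l\lambda_l^q)^k$ is homogeneous of degree $qk$ in the $\lambda_l$'s, the Laplace-transform/scaling argument leading to (\ref{0.3d}) — the same one supplying the proportionality (\ref{ds1e}) and its extension (\ref{ds1e+}) — gives the conversion factor $\Gamma(N(N+a))/\Gamma(N(N+a)+qk) = (N(N+a)-1)!/(N(N+a)+qk-1)!$ between the LUE and fLUE moments, producing the stated prefactor and completing the derivation. The only genuinely new ingredient is the generalised-Schur expansion of step 1; once that identity is in place, the remainder is bookkeeping around Andreief's identity and the fixed-trace scaling already recorded in Section~\ref{S2.1}.
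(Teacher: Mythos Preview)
Your argument is correct and follows essentially the same route as the paper: multinomial expansion combined with the Schur-polynomial LUE integral (the paper's (\ref{J4})), then the Laplace/scaling conversion from LUE to the fixed-trace ensemble. The only cosmetic difference is that the paper first uses the antisymmetry of $\Delta$ to reduce to a single-Vandermonde integral (\ref{J5}) and then symmetrises each multinomial term back to Schur form via (\ref{J3}), whereas you package this desymmetrise--resymmetrise step as the single generalised-Schur expansion of $(\sum_l\lambda_l^q)^k$; the underlying content is identical.
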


\begin{proof}
The working in \cite{Gi07} for the case $q=2$ is sufficient to derive the general $q$ result (\ref{d1bqs}).

With
\begin{equation}\label{J1}
J_{a,N}(\mathbf k) := N! \int_0^1 dx_1 \cdots  \int_0^1 dx_N \,  \delta \Big ( 1 - \sum_{i=1}^N x_i \Big ) \Delta(\mathbf x) 
\prod_{i=1}^N x_i^{a + k_i + i - 1},
\end{equation}
where $\mathbf k = (k_1,\dots, k_N)$, $\Delta(\mathbf x) = \prod_{1 \le i < j \le N} (x_j - x_i)$, this working
relied on the evaluation formula
\begin{equation}\label{J2}
J_{a,N}(\mathbf k) =  {N!    \prod_{i=1}^N  \Gamma( a + k_i + i ) \over  \Gamma(N^2 + a N + \sum_{i=1}^N k_i - 1)!}
\prod_{1 \le i < j \le N} (k_j - k_i + j - i).
\end{equation}
To deduce (\ref{J2}),
introduce the Schur polynomial $s_{\kappa_1,\dots,\kappa_N}(\mathbf x)$ specified
by
\begin{equation}\label{sf}
s_{\kappa_1,\dots,\kappa_N}(\mathbf x) = \det[ x_j^{\kappa_{N+1 - i} + i -1} ]_{i,j=1}^N/ \Delta (\mathbf x);
\end{equation}
conventionally it is required that $\kappa_1 \ge \cdots \ge \kappa_N \ge 0$ form a partition of
non-negative integers, however (\ref{sf}) is well defined without this assumption although it will no longer necessarily
be a polynomial.
By symmetrising the integrand of (\ref{J1}) the final factor therein becomes a determinant, and making use of
(\ref{sf}) shows
\begin{equation}\label{J3}
J_{a,N}(\mathbf k) :=  \int_0^1 dx_1 \cdots  \int_0^1 dx_N \,  \delta \Big (1 -  \sum_{i=1}^N x_i \Big ) (\Delta(\mathbf x) )^2
s_{a+k_N,\dots,a+k_1}(\mathbf x).
\end{equation}
On the other hand, from the theory of Schur polynomials it is well known \cite{Hu63} (see \cite[\S 2.3]{FI16} for
another recent application in random matrix theory) that
\begin{equation}\label{J4}
 \int_0^\infty dx_1 \cdots  \int_0^\infty dx_N \, \prod_{l=1}^N x_l^a  e^{- x_l}  (\Delta(\mathbf x) )^2
s_{\kappa_1,\dots, \kappa_N}(\mathbf x) = N!  \prod_{i=1}^N  \Gamma( a + k_i + i ) \prod_{1 \le i < j \le N} (\kappa_i - \kappa_j + j - i).
\end{equation}
By using the same strategy as in the derivation of (\ref{ds1e}), we can deduce from this the evaluation (\ref{J2}).

To make use of (\ref{J2}) in the derivation of (\ref{d1bqs}), following the working in
 \cite{Gi07} for the case $q=2$ we begin by noting from the antisymmetry of $\Delta(\mathbf x)$ that
 \begin{equation}\label{J5}
\Big \langle \Big (   \sum_{j=1}^N \lambda_j^q \Big )^k \Big \rangle^{\rm (fLUE)} =
{N! \over C_{N,a}^{\rm (fL)}}  \int_0^1 dx_1 \cdots  \int_0^1 dx_N \,  \delta \Big ( 1 - \sum_{i=1}^N x_i \Big ) \Delta(\mathbf x) 
\prod_{i=1}^N x_i^{a  + i - 1} \Big ( \sum_{j=1}^N x_j^q \Big )^k.
 \end{equation}
Now substituting 
\begin{equation}\label{J6}
\Big ( \sum_{j=1}^N x_j^q \Big )^k = \sum_{k_1,\dots, k_N \ge 0 \atop k_1 + \cdots + k_N = k} {k!\over k_1! \cdots k_N!}
x_1^{q k_1} \cdots x_N^{q k_N} ,
\end{equation}
and taking the sum outside the integral, allows each of the integrals
to be evaluated according to (\ref{J2}), and (\ref{d1bqs}) follows.
\end{proof}

In the case $k=1$ the multi-index summation in (\ref{d1bqs}) can be replaced by a sum over $j$, $1 \le j \le N$, where
$k_j = 1$, $k_p= 0$ $(p \ne s$). This allows for the case $k=1$ to be expressed in terms of a terminating
${}_3 F_2$ hypergeometric function of unit argument.

\begin{prop}\label{P3.4}
We have
\begin{multline}\label{d1bq}
\Big \langle    \sum_{j=1}^N \lambda_j^q   \Big \rangle^{\rm (fL)} =
{(N (N + a) - 1)! \over ( N(N + a) + q - 1)!} 
{\Gamma(N+a+q) \over \Gamma(N+a)} {\Gamma(N+q) \over \Gamma(1 + q) \Gamma(N)} \\
\times {}_3 F_2 \bigg ( {1-N, 1-(N+a), 1 - q \atop 1 - (N-q), 1 - (N+a+q)} \bigg | 1 \bigg ).
\end{multline}
\end{prop}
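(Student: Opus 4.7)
The plan is to specialize the multi-index summation formula (\ref{d1bqs}) with $k=1$ and then repackage the resulting single sum as a terminating ${}_3F_2$ at unit argument.

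First, I would observe that when $k=1$, the constraint $k_1 + \cdots + k_N = 1$ with $k_i \ge 0$ forces exactly one index $s \in \{1,\dots,N\}$ to satisfy $k_s = 1$, with all other $k_p = 0$. Consequently, the multi-index sum collapses into a single sum over $s$, and the multinomial coefficient $k!/(k_1! \cdots k_N!)$ equals $1$. The product $\prod_{i=1}^N (N+a+q k_i - i)!/(N+a-i)!$ then reduces to the single nontrivial factor $(N+a+q-s)!/(N+a-s)!$, since every other $i \ne s$ contributes $1$.

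Next, I would evaluate the product $\prod_{1 \le i < j \le N} (qk_i - i - qk_j + j)/(j-i)$ at the configuration $k_s = 1$. The pairs with neither index equal to $s$ contribute $1$. The pairs $(s,j)$ with $j > s$ contribute $(q + j - s)/(j-s)$, and reindexing $m = j-s$ gives the telescoping product $\prod_{m=1}^{N-s}(q+m)/m = (q+1)_{N-s}/(N-s)!$. Similarly, the pairs $(i,s)$ with $i < s$ contribute $(s-i-q)/(s-i)$, and reindexing $m = s-i$ yields $\prod_{m=1}^{s-1}(m-q)/m = (1-q)_{s-1}/(s-1)!$.

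At this stage the average is a finite sum
\begin{equation*}
\sum_{s=1}^{N} \frac{(N+a+q-s)!}{(N+a-s)!}\,\frac{(q+1)_{N-s}}{(N-s)!}\,\frac{(1-q)_{s-1}}{(s-1)!},
\end{equation*}
multiplied by the prefactor $(N(N+a)-1)!/(N(N+a)+q-1)!$ from (\ref{d1bqs}). Shifting the summation index via $j = s-1$ running from $0$ to $N-1$, I would factor out the $j=0$ term, which after converting factorials to gamma functions equals $\Gamma(N+a+q)\Gamma(N+q)/(\Gamma(N+a)\Gamma(q+1)\Gamma(N))$, matching exactly the prefactor appearing in the proposition. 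The remaining ratio $t_{j+1}/t_j$ is a rational function of $j$ with three linear factors in the numerator and three (including $j+1$) in the denominator, which I would reorganize into the standard form $(\alpha_1+j)(\alpha_2+j)(\alpha_3+j)/((\beta_1+j)(\beta_2+j)(1+j))$ and read off the parameters $\alpha_i$ and $\beta_i$ as the negatives of $N-1$, $N+a-1$, $q-1$ and $N+q-1$, $N+a+q-1$ respectively, giving the hypergeometric arguments in (\ref{d1bq}).

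The argument is essentially a bookkeeping exercise, and the only mildly delicate step is the final identification of the Pochhammer parameters: one has to track the signs carefully, using that factors of the form $C-j$ correspond to upper/lower ${}_3F_2$ parameters of the form $1-C$ via $(1-C)_j = (-1)^j (C-1)(C-2)\cdots(C-j)$, and verify that the extra global signs from numerator and denominator cancel. Beyond that, no new ideas beyond Proposition~\ref{P3.2} are required.
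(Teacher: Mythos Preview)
Your proposal is correct and follows essentially the same route as the paper: specialize (\ref{d1bqs}) to $k=1$, split the double product into the factors coming from $i<s$ and $i>s$, and recognize the resulting single sum as a terminating ${}_3F_2$ at unit argument. The only cosmetic difference is that the paper writes each of the two product factors directly as a ratio of Pochhammer symbols (so the ${}_3F_2$ parameters are read off immediately), whereas you identify them via the term ratio $t_{j+1}/t_j$; both lead to the same parameters, and in particular your lower parameter $1-(N+q)$ agrees with what the paper's own proof produces.
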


\begin{proof}
After parameterising the multi-index summation as indicated,  the single product in the sum reduces to
$$
{(N+a+q - j)! \over (N+a - j)! } = {\Gamma(N + a + q) \over \Gamma(N+a)} {(1 - (N + a))_{j-1} \over (1 - (N + a+q))_{j-1} },
$$
and the double product reduces to
$$
\prod_{i=1}^{j-1} { - i - q + j \over -i + j } \prod_{i=j+1}^N {i + q - j \over i - j} =
{(1 - q)_{j-1} \over (j-1)!} {\Gamma(N + q) \over \Gamma(1 + q) \Gamma(N)} {(1 - N)_{j-1} \over (1 - (N + q))_{j-1}}.
$$
The summation over $j$ can be recognised as the series form of a particular ${}_3 F_2$ function,
implying (\ref{d1bq}).

\end{proof}

Note that for $q$ a positive integer less than $N$ the hypergeometric summation in
(\ref{d1bq}) has only $q$ nonzero terms. In particular, for $q=1$ we can check that
the result for $\kappa_1^{(\rm fL)}$ in Corollary \ref{C3.2} is recovered.  

Using a different formalism to our Proposition \ref{P3.4}, Wei in the work \cite{We19}\footnote{For $q,a$ non-negative integers, this in an equivalent form can be found in \cite[Th.~2.5]{HSS92}.} 
has derived the formula
\begin{equation}\label{d1bq+}
\Big \langle    \sum_{j=1}^N \lambda_j^q   \Big \rangle^{\rm (L)} =
{N \Gamma(N+a+q) \over \Gamma(N+a)}
\, {}_3 F_2 \bigg ( {1-N, -q ,1 - q \atop  2, 1 - (N+a+q)} \bigg | 1 \bigg ).
\end{equation}
In keeping with (\ref{ds1e})
$$
\Big \langle    \sum_{j=1}^N \lambda_j^q   \Big \rangle^{\rm (fL)} =
{(N (N + a) - 1)! \over ( N(N + a) + q - 1)!} \Big \langle    \sum_{j=1}^N \lambda_j^q   \Big \rangle^{\rm (L)}
$$
so after substituting (\ref{d1bq}) on the left hand side, and (\ref{d1bq+}) on the right hand side,
it follows that we must have
\begin{equation}\label{d1bq+1}
 {\Gamma(N+q) \over \Gamma(1 + q) }
\, {}_3 F_2 \bigg ( {1-N, 1-(N+a), 1 - q \atop 1 - (N-q), 1 - (N+a+q)} \bigg | 1 \bigg )
= N! 
\, {}_3 F_2 \bigg ( {1-N, -q, 1 - q \atop  2, 1 - (N+a+q)} \bigg | 1 \bigg ).
\end{equation}
This can be recognised as a special case of the more general identity \cite[Entry (7.4.4.85)]{PBM90}
\begin{equation}\label{d1bq+s}
\, {}_3 F_2 \bigg ( {-n, b, c  \atop d, e} \bigg | 1 \bigg )
= {(d - b)_n \over (d)_n} 
\, {}_3 F_2 \bigg ( {-n , b , e-c \atop  e, b-d-n+1} \bigg | 1 \bigg ),
\end{equation}
valid for non-negative integer $n$.

In \cite{Ok21}, Okuyama used the replica method to deduce (\ref{d1bq}), and noted too the alternative form
provided by using (\ref{d1bq+1}). The latter was then used to obtain the further rewrite of (\ref{d1bq})
\begin{equation}\label{d1bqA}
\Big \langle    \sum_{j=1}^N \lambda_j^q   \Big \rangle^{\rm (fL)} =
{\Gamma(N) \Gamma(N(N+a)+1) \over \Gamma(a+N+1) \Gamma(N(N+a)+q)}
\sum_{k=1}^N {\sf N}_{q,k} {\Gamma(a+N+1+q+k) \over \Gamma(N+1-q)},
\end{equation}
where ${\sf N}_{q,k} := {1 \over q} \binom{q}{k} \binom{q}{k-1}$ are the Narayama numbers. From this form it was demonstrated
that
\begin{equation}\label{d1bqB}
\lim_{N,a \to \infty \atop N/(N+a) \to \alpha} N^{q-1} \Big \langle    \sum_{j=1}^N \lambda_j^q   \Big \rangle^{\rm (fL)} = \sum_{k=1}^\infty {\sf N}_{q,k} 
\alpha^{k-1},
\end{equation}
thus reclaiming a result from \cite{KF21} derived using a different approach. Note that for $q$ a positive integer, the sum on the right hand side is finite, with upper terminal $q$, and it specifies the integer moments of the Marchenko-Pastur law (\ref{d8}).

The work \cite{We19} also derived a formula for $\langle ( \tilde{T}_q)^2 \rangle^{(L)}$, involving
finite single and double sums over particular ${}_3 F_2$ functions. Specialising to $q=2$ gives a formula
\cite[Eq.~(58)]{We19} consistent with the moment $\hat{m}_2^{(L)}$
as implied by Proposition \ref{p3.1} and (\ref{cm}). Moreover, as initiated in \cite{BD19,SK19}, it was
shown in \cite{We19} how knowledge of $\langle \tilde{T}_q \rangle^{(\rm fL)}$ and $\langle (\tilde{T}_q)^2 \rangle^{(\rm fL)}$,
as implied by first computing  $\langle {T}_q \rangle^{(\rm fL)}$ and $\langle ({T}_q)^2 \rangle^{(\rm fL)}$,
then using (\ref{Lw}), can be used to facilitate the calculation of the variance for von Neumann entropy
\cite{VPO16, We17}. For recent related work on this theme see \cite{We20a,We20b,WW21,BHK21,HWC21,LW21}.

We also remark that the moments $\Big \langle    \sum_{j=1}^N \lambda_j^q   \Big \rangle^{\rm (L)}$ for $q$ continuous have
featured in the recent study \cite{CMOS18}, where they are specified in 
terms of a certain continuous
  Hahn polynomial of degree $N-1$, as noted below (\ref{ds1d}).
  The latter are known to be given in terms of a terminating ${}_3 F_2$ hypergeometric function,
  and in fact such a formula was presented in \cite[Eq.~(4.11)]{CMOS18}. This can also be shown to be
  equivalent to (\ref{d1bq+})  upon appropriate use of (\ref{d1bq+s}).

  \begin{remark} ${}$ \\
  1.~Generally in random matrix theory, the average of the linear statistic $\sum_{j=1}^N \lambda_j^q$ correspond to the moments of the
  spectral density, and have been studied for many decades now. 
  Motivating these studies has been the analysis of the global large $N$ limit \cite{Wi58}, their combinatorial
  significance \cite{BIPZ78,HZ86}, applications to fluctuation formulas (ses e.g.~the review \cite{Fo23}), as well as  their integrability properties as
  further developed here. Additional recent works on the latter theme not already cited include \cite{By24,BF24,BFO24,ABO25,BJO25}. \\
  2.~As with the case $q=2$, Painlev\'e structures are also present for $q=-1$. For this value of $q$, from the condition $qk+a > - 1$ in
  Proposition \ref{P3.2}, note that the left hand side of (\ref{d1bqs}) is only defined for $k < a + 1$. On the other hand, the Laplace-Fourier transform of the linear
  statistic $\sum_{j=1}^N \lambda_j^{-1}$,
  \begin{equation}\label{d1Lz}
\hat{Q}_{N,a}^{\rm (L)}(s) := {1 \over C_{N,a}^{(\rm L)}}  \int_0^\infty d\lambda_1 \cdots \int_0^\infty d\lambda_N \,  
\prod_{l=1}^N e^{-\lambda_l -{s  \over  \lambda}}
 \lambda_l^a \prod_{1 \le j < k \le N} (\lambda_k - \lambda_j)^2,
\end{equation}
is well defined for all Re$(s) \ge  0$ (although is not analytic at $s=0$).  It has been shown in
multiple works \cite{OK07,CI10,MS13,DFX22} from various viewpoints that  $s {d \over ds} \log \hat{Q}_{N,a}^{\rm (L)}(s)$
is a particular  $\sigma$-Painlev\'e PIII$'$ trascendent. An application given in \cite{XDZ14,DFX22} shows that $\hat{Q}_{N,a}^{\rm (L)}(s/N)$,
and thus the cumulants $\kappa_k/N^k$ ($k<a+1$), have a well defined $N \to \infty$ limit, with the former given as the solution
of a particular third order nonlinear equation. 
\end{remark}

  \section{Purity moments for general $\beta$}\label{S4}
  The eigenvalue PDFs (\ref{0.2}) and (\ref{0.3}) admit natural $\beta$ generalisations, 
  \begin{equation}\label{0.2beta}
p_{N,a,\beta}^{(\rm L{}_\beta)}(\lambda_1,\dots,\lambda_N) := {1 \over C_{N,a,\beta}^{(\rm L{}_\beta)}} \prod_{l=1}^N \lambda_l^a e^{- \lambda_l} \prod_{1 \le j < k \le N} |\lambda_k - \lambda_j|^\beta \end{equation}
(this is (\ref{0.2a+}) up to scaling of the $\lambda_l$) and
\begin{equation}\label{0.3beta}
p_{N,a,\beta}^{(\rm fL{}_\beta)}(\lambda_1,\dots,\lambda_N) :=
{1 \over C_{N,a,\beta}^{(\rm fL{}_\beta)}}  \delta \Big (1 -  \sum_{l=1}^N \lambda_l \Big ) \prod_{l=1}^N \lambda_l^a \prod_{1 \le j < k \le N} |\lambda_k - \lambda_j|^\beta,
\end{equation}
which are well established in random matrix theory \cite{Fo10}. The parameter $\beta$ is often referred to as the Dyson index. In this section we will
take up the task of computing the positive integer moments of the  statistic $T_2 = \sum_{j=1}^N \lambda_j^2$ 
(i.e.~the purity in the context of quantum information)
with respect to (\ref{0.2beta}).
The same moments with respect to (\ref{0.3beta}) then follow from the analogue of (\ref{ds1e}),
\begin{equation}\label{ds1k}    
 m_p^{(\rm fL{}_\beta)} = {1 \over (\beta N (N - 1)/2 +N(a+1))_{2p}}  m_p^{(\rm L{}_\beta)}.
  \end{equation}
An explicit formula for the mean $ m_1^{(\rm L{}_\beta)}$ can be read off from a result
in \cite[Eq.~(A.9b)]{MRW15}
\begin{align}\label{mrs}
 m_1^{(\rm L{}_\beta)} & = N \Big ( a^2 + 3a - {4  \tau} - {3 a  \tau}
 +2 + {2  \tau^2} \Big ) + N^2 \Big (  {3 a  \tau} - {4  \tau^2} + {4  \tau} \Big )
 + {2 N^3  \tau^2} \nonumber \\
 & =  N  \Big ( \tau (N - 1) + (1+a) \Big ) \Big (2 \tau ( N - 1)  +(2+a)  \Big ), 
 \end{align}
 where $\tau = \beta / 2$. Notice that for $\beta = 2$ this reduces to 
 the result for $\kappa_1^{\rm (L)}$ given in Proposition \ref{p3.1}.
 
 The positive integer moments for general $\beta$ can be studied by considering the Laplace-Fourier
 transform
 \begin{equation}\label{d1Lb}
\hat{P}_{N,a,\beta}^{\rm (L{}_\beta)}(s) = {1 \over C_{N,a,\beta}^{(\rm L{}_\beta)}}  \int_0^\infty d\lambda_1 \cdots \int_0^\infty d\lambda_N \,  
\prod_{l=1}^N e^{-\lambda_l -s   \lambda_l^2}
 \lambda_l^a \prod_{1 \le j < k \le N} |\lambda_k - \lambda_j|^\beta
\end{equation}
(cf.~(\ref{d1L})). With the $\beta$ generalisation of the GUE eigenvalue PDF (\ref{0.1}) specified by 
 \begin{equation}\label{0.1b}
p_{N,\beta}^{(\rm G{}_\beta)}(\lambda_1,\dots,\lambda_N) := {1 \over C_{N,\beta}^{(\rm G{}_\beta)}} \prod_{l=1}^Ne^{- \lambda_l^2} \prod_{1 \le j < k \le N} |\lambda_k - \lambda_j|^\beta, \end{equation}
our analysis proceeds by first changing variables $\lambda_l \mapsto \lambda_l/\sqrt{s}$ in (\ref{d1Lb}) and completing the
square in the exponent to deduce the $\beta$ generalisation of (\ref{G1})
\begin{equation}\label{G1b}
\hat{P}_{N,a,\beta}^{\rm (L{}_\beta)}(s) = {C_{N,\beta}^{\rm (G{}_\beta)} \over  C_{N,a,\beta}^{\rm (L{}_\beta)}} e^{N/4s} (1 / \sqrt{s} )^{N  (a + 1)   + \beta N (N - 1)/2}
\Big \langle    \prod_{j=1}^N( \lambda_j - 1/(2 \sqrt{s}) )^a \chi_{\lambda_j > 1/(2 \sqrt{s})}  \Big \rangle^{\rm (G{}_\beta)}.
\end{equation}

The utility of (\ref{G1b}) stems from the fact that a generalisation of the multidimensional integral corresponding to the right hand side,
\begin{multline}\label{5.1b+}
I_{p,N,\tau}^{(\alpha)}(x) = {(-1)^p \over C_p^N}\int_x^\infty dt_1 \cdots  \int_x^\infty dt_N \,
\prod_{l=1}^N  e^{- t_l^2} ( t_l - x)^\alpha \\
\times \prod_{1 \le j < k \le N} | t_k - t_j|^{2 \tau} 
e_p( t_1 - x,\dots,  t_N - x),
\end{multline}
(cf.~(\ref{5.1b})) satisfies the differential-difference recurrence \cite[for the case $\tau = 1/2$]{Ek74}, \cite{FT19}
 \begin{equation}\label{5.1c+}
 (N-p) I_{p+1}(x) = (N - p) x  I_{p}(x) + {1 \over 2} {d \over d x} I_{p}(x)  - { p (\tau (N - p) + \alpha + 1) \over 2} I_{p-1}(x),
\end{equation} 
valid for $p=0,\dots,N$, where we have abbreviated $I_{p,N}^{(\alpha)}(x) =  I_{p}(x)$.
We can use (\ref{5.1c+}) to deduce a matrix differential equation (recall Section \ref{S2.2})
for a family of multiple integrals
generalising (\ref{d1Lb})
\begin{multline}\label{5.1d}
H_p(s) = H_{p,N,a,\tau}(s) =
{(-1)^p \over C_p^N}\int_0^\infty dt_1 \cdots  \int_0^\infty dt_N \,
\prod_{l=1}^N t_l^a e^{- t_l}  e^{- s t_l^2} \\
\times \prod_{1 \le j < k \le N} | t_k - t_j|^{2 \tau} 
e_p( t_1,\dots,  t_N).
\end{multline}

\begin{prop}
Introduce the column vector $\mathbf H(s) = [H_p(s) ]_{p=0}^N$, and the bidiagonal matrices
\begin{align}\label{hAB}
{\sf A} & = - {\rm diag} \, \Big ( N (a + 1) + \tau N (N - 1) + p \Big )_{p=0}^N - {\rm diag}{}^+  \, \Big ( N - p \Big )_{p=0}^{N-1} \nonumber \\
{\sf B} & = - {\rm diag} \,  \Big ( {p \over 2} \Big )_{p=0}^N -  {\rm diag}{}^-  \, \Big ({ p( \tau (N - p) + a + 1) \over 2} \Big )_{p=1}^{N}.
\end{align}
Here $ {\rm diag}{}^+$ ($ {\rm diag}{}^-$)    refers to the diagonal directly above (below) the main diagonal. 
We have
\begin{equation}\label{hAS1}
2 s^2 {d \over d s} \mathbf H(s)  = s {\sf A} \mathbf H(s) + {\sf B} \mathbf H(s).
\end{equation}
Also, denote by ${\sf B}_q$ the matrix
which has the first row in $\sf B$, labelled by $p=0$ and which consists of all zeros, replaced by the first row in $-{\sf A} + 2 q {\sf I}_{N+1}$.
Introducing the power series expansion
\begin{equation}\label{hAS2}
\mathbf H(s) = \sum_{q=0}^\infty \mathbf h_q s^q,
\end{equation}
we have the recurrence
\begin{equation}\label{hAS3}
 \mathbf h_{q+1} = B_{q+1}^{-1} ( - {\sf A} + 2 q {\sf I}_{N+1})  \mathbf h_{q},
\end{equation}
subject to the requirement that
\begin{equation}\label{hAS4}
{\sf B}  \mathbf h_0 = \mathbf 0.
\end{equation}
\end{prop}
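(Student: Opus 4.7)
The plan has two parts: deduce the matrix ODE (\ref{hAS1}) by relating $H_p(s)$ to the already-analysed family $I_{p,N,\tau}^{(a)}(x)$, and then extract the coefficient recursion by substituting the ansatz (\ref{hAS2}).

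For the matrix ODE, I would complete the square in the Laguerre weight, $-t - st^2 = -s(t + 1/(2s))^2 + 1/(4s)$, and change variables $u_l = \sqrt{s}\,(t_l + 1/(2s))$ in (\ref{5.1d}). The integration range becomes $[1/(2\sqrt{s}),\infty)$, and straightforward bookkeeping of the Jacobian, the weights $\prod t_l^a$ and $\prod|t_k - t_j|^{2\tau}$, and the elementary symmetric polynomial $e_p$ gives the identity
\[
H_p(s) = e^{N/(4s)}\, s^{-\gamma_p}\, I_{p,N,\tau}^{(a)}\!\left(\frac{1}{2\sqrt{s}}\right), \qquad \gamma_p = \tfrac{1}{2}\bigl(N(a+1) + p + \tau N(N-1)\bigr).
\]
Differentiating in $s$, using $dx/ds = -1/(4s^{3/2})$ for $x = 1/(2\sqrt{s})$, and eliminating $I_p'(x)$ via the differential--difference recurrence (\ref{5.1c+}) converts $I_{p\pm 1}(x)$ back to $H_{p\pm 1}(s)$ with a factor $s^{\gamma_{p\pm 1}-\gamma_p} = s^{\pm 1/2}$. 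After multiplying by $2s^2$ and collecting, the resulting scalar identity
\[
2s^2 H_p'(s) = -s\bigl(N(a+1)+\tau N(N-1)+p\bigr)H_p - (N-p)\,s\, H_{p+1} - \tfrac{p}{2} H_p - \tfrac{p\,(\tau(N-p)+a+1)}{2} H_{p-1}
\]
is, row by row, precisely (\ref{hAS1}) with the bidiagonal matrices ${\sf A}$, ${\sf B}$ of (\ref{hAB}).

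For the recursion, substituting (\ref{hAS2}) into (\ref{hAS1}) and matching coefficients of $s^q$ gives ${\sf B}\,\mathbf{h}_0 = \mathbf{0}$ at order $s^0$ and ${\sf B}\,\mathbf{h}_{q+1} = (-{\sf A}+2q\,{\sf I}_{N+1})\,\mathbf{h}_q$ at order $s^{q+1}$ for $q \ge 0$. The main obstacle to writing $\mathbf{h}_{q+1}={\sf B}^{-1}(\cdots)\mathbf{h}_q$ directly is that ${\sf B}$ is singular: its $p=0$ row is identically zero, since both the diagonal and sub-diagonal entries in that row carry the factor $p$. This degeneracy forces the compatibility condition $\bigl((-{\sf A}+2q\,{\sf I}_{N+1})\,\mathbf{h}_q\bigr)_0 = 0$ at every order; shifted by one index, this is exactly the first-row entry of the equation in which the zero first row of ${\sf B}$ is replaced by the first row of $-{\sf A}+2(q+1)\,{\sf I}_{N+1}$. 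Consequently the augmented system ${\sf B}_{q+1}\,\mathbf{h}_{q+1} = (-{\sf A}+2q\,{\sf I}_{N+1})\,\mathbf{h}_q$ is equivalent to the original matched-coefficient equations, and its coefficient matrix is generically invertible (a non-zero sub-diagonal together with a non-zero $(0,0)$ entry from $-{\sf A}+2(q+1)\,{\sf I}_{N+1}$), yielding (\ref{hAS3}). The initial constraint (\ref{hAS4}) identifies $\mathbf{h}_0$ up to a single scale, fixed by the normalisation $\hat{P}_{N,a,\beta}^{\rm(L_\beta)}(0) = H_0(0) = 1$, and the recursion then generates all higher $\mathbf{h}_q$.
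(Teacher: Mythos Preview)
Your proof is correct and follows essentially the same route as the paper: both establish the identity $H_p(s)=e^{N/(4s)}s^{-\gamma_p}I_{p,N,\tau}^{(a)}(1/(2\sqrt{s}))$ (the paper's (\ref{hAS5})) via completing the square, transport the recurrence (\ref{5.1c+}) through it to obtain the scalar differential--difference relation (which you write out and which matches the paper's displayed equation), and then handle the singular first row of ${\sf B}$ by the same replacement argument. One minor slip in your final parenthetical: from the definition (\ref{5.1d}), $H_0$ carries no normalising factor, so $H_0(0)=C_{N,a,\beta}^{(\rm L_\beta)}$ rather than $1$ (cf.\ the paper's remark after the proof).
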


\begin{proof}
From the definitions
\begin{equation}\label{hAS5}
 I_{p,N}^{(a)}(1/(2 \sqrt{s})) = (\sqrt{s})^{N(a+1) + \tau N (N - 1) + p} e^{- N/ 4 s} H_p(s).
 \end{equation}
 Thus after changing variables $x = 1/(2 \sqrt{s})$ in (\ref{5.1c+}), then substituting according
 to (\ref{hAS5}), we have that $\{ H_p(s) \}_{p=0}^N$ satisfies the differential-difference equation
 \begin{multline*}
 (N - p) s H_{p+1}(s) = (N - p) {1 \over 2} H_p(s) - \Big ( s ( N (a + 1) + \tau N (N - 1) + p) + {N \over 2} \Big ) H_p(s) - 2 s^2 {d \over d s} H_p(s) \\
 - {p ( \tau (N - p) + a + 1) \over 2} H_{p-1}(s),
 \end{multline*}
 valid for $p=0,\dots,N$. Writing this in matrix form gives (\ref{hAS1}).
 
 Substituting (\ref{hAS2}) in (\ref{hAS1}) shows
 \begin{equation}\label{hAS6}
 \sum_{q=0}^\infty ( - {\sf A} + 2 q {\sf I}_{N+1} ) \mathbf h_q s^{q+1} = \sum_{q=0}^\infty {\sf B} \mathbf h_q s^q.
 \end{equation}
 Equating the term independent of $s$ on both sides gives (\ref{hAS4}). Equating powers of
 $s^{q+1}$ for $q \ge 0$ gives
  \begin{equation}\label{hAS7}
  (- {\sf A} + 2 q {\sf I}_{N+1}) \mathbf h_q  = {\sf B}  \mathbf h_{q+1}.
 \end{equation}  
 The first row in $\sf B$, labelled by $p=0$, has all entries equal to zero. Thus the first entry in
 $ (- {\sf A} + 2 q {\sf I}_{N+1}) $ is equal to zero. This implies that the matrix $\sf B$ as it appears in (\ref{hAS7})
 can have its first row replaced by the first row in $-{\sf A} + 2 (q+1) {\sf I}_{N+1}$. The resulting matrix, denoted
 ${\sf B}_{q+1}$, is invertible and (\ref{hAS3}) follows.
 \end{proof}
  
 Writing $\mathbf h_q = (h_q^{(p)} )_{p=0}^N$ we see from (\ref{hAS2}), (\ref{d1Lb}) and (\ref{5.1d})
 that $h_0^{(0)} =  C_{N,a,\beta}^{(\rm L{}_\beta)}$. Moreover, it follows from (\ref{hAS4}) that
  \begin{equation}\label{hAS8}
  h_0^{(p)} = h_0^{(0)}  (-1)^p  \prod_{l=1}^p ( \tau (N - l ) + a + 1).
  \end{equation}  
  As a check, we note from the definitions that $h_0^{(N)} = (-1)^N  C_{N,a+1,\beta}^{(\rm L{}_\beta)}$. This
  substituted in (\ref{hAS8}) implies
   \begin{equation}\label{hAS9} 
   { C_{N,a+1,\beta}^{(\rm L{}_\beta)} \over  C_{N,a,\beta}^{(\rm L{}_\beta)}}  = \prod_{l=1}^N ( \tau(N - l) + a + 1),
\end{equation}  
which is indeed true as     $C_{N,a,\beta}^{(\rm L{}_\beta)}$ has the product of gamma function evaluation
(see e.g.~\cite[Prop.~4.7.3]{Fo10})
  \begin{equation}\label{hAS9a} 
C_{N,a,\beta}^{(\rm L{}_\beta)} = \prod_{j=0}^{N-1} { \Gamma(1 + (j+1) \tau) \Gamma(a + 1 + j \tau) \over
\Gamma(1 + \tau)}.
\end{equation}   
 
 Denote the first entry in $\mathbf h_q$ by $h_q^{(0)}$ so that
   \begin{equation}\label{X1}
   H_0(s) = h_0^{(0)}\sum_{q=0}^\infty {(-1)^q m_q^{(L_\beta)} \over q!} s^q = \sum_{q=0}^\infty h_q^{(0)} s^q.
 \end{equation}  
 Here the first equality follows by comparing   (\ref{5.1d})  and (\ref{d1Lb}), and performing
 a power series expansion in $s$ of the latter, recalling too that $m_q^{(\rm L_\beta)}$ refers
 to the moments of the linear statistic $T_2 = \sum_{j=1}^N \lambda_j^2$ with respect to (\ref{0.2beta}).
 Thus, in light of (\ref{hAS3}) and (\ref{hAS8}), for a given $N$ we have a computational scheme to evaluate
 $\{ m_q^{(\rm L_\beta)} \}_{q=1,2,\dots}$ which involves only multiplication of matrices of size $(N+1) \times (N + 1)$.
 Moreover, we can use the fact that each $m_q^{(L_\beta)}$ is a polynomial in $N$ of degree $3q$ (i.e. the
 same degree as $(m_1^{(L_\beta)})^q$) to deduce the dependence on $N$ from a table of $m_q^{(\rm L_\beta)}$
 (or more) different values of $N$. Proceeding in this way, extending the result (\ref{mrs}) for $m_1^{(L_\beta)}$
 we can deduce the explicit form of $m_2^{(L_\beta)}$. This simplifies upon introducing the corresponding
 cumulant $\kappa_2^{(\rm L_\beta)} = m_2^{(\rm L_\beta)} - (m_1^{(\rm L_\beta)})^2$.

 \begin{prop}
 Let $m_1^{(L_\beta)}$ be given by (\ref{mrs}), where we recall $\tau = \beta / 2$.
 We have
 \begin{equation}\label{k2+}
 m_2^{(\rm L_\beta)} = \kappa_2^{(\rm L_\beta)} + (m_1^{(\rm L_\beta)})^2,
 \end{equation}
 where
  \begin{equation}\label{k2}
 \kappa_2^{(\rm L_\beta)} =  2 N  \Big (1 + a + \tau (-1 + N) \Big ) \Big (10 + 2 a^2 + 9 a (1 +  \tau(-1 + N)) + 
    \tau (-1 + N) (19 + \tau (-10 + 9 N)) \Big ).
    \end{equation}
    \end{prop}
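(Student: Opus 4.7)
The plan is to exploit the linear recurrence (\ref{hAS3}) together with the initial condition (\ref{hAS4}) (whose explicit solution is (\ref{hAS8})) to compute $m_2^{(\rm L_\beta)}$ symbolically for enough specific values of $N$ that the general $N$ dependence can be reconstructed by polynomial interpolation. The observation underpinning this plan is the one noted just before the proposition: $m_q^{(\rm L_\beta)}$ is a polynomial in $N$ of degree $3q$ (with $a$ and $\tau$ as parameters), so for $q=2$ the answer lies in a 7-dimensional space of polynomials in $N$, and evaluating it at seven or more values of $N$ determines it uniquely.

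Concretely, I would first write down $\mathbf h_0 = h_0^{(0)}\bigl( (-1)^p \prod_{l=1}^p (\tau(N-l) + a + 1) \bigr)_{p=0}^N$ from (\ref{hAS8}), carrying $h_0^{(0)} = C_{N,a,\beta}^{(\rm L_\beta)}$ as a symbolic constant. Then I would form the matrices $\sf A, \sf B$ from (\ref{hAB}), assemble ${\sf B}_1$ and ${\sf B}_2$ by replacing the (identically zero) first row of $\sf B$ with the first row of $-\sf A$ and of $-{\sf A}+2{\sf I}_{N+1}$ respectively, and apply (\ref{hAS3}) twice to obtain $\mathbf h_2$. Extracting the top entry $h_2^{(0)}$ and invoking (\ref{X1}) gives $m_2^{(\rm L_\beta)} = 2 h_2^{(0)}/h_0^{(0)}$. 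Performing this for $N = 1, 2, \ldots$ up to a sufficient number of values, with $a$ and $\tau$ kept symbolic, and Lagrange-interpolating in $N$ yields the closed form of $m_2^{(\rm L_\beta)}$.

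Finally, forming $\kappa_2^{(\rm L_\beta)} = m_2^{(\rm L_\beta)} - (m_1^{(\rm L_\beta)})^2$, with $m_1^{(\rm L_\beta)}$ supplied by (\ref{mrs}), and simplifying should produce the factored expression (\ref{k2}). The main obstacle is the algebraic manipulation: the raw interpolated polynomial of degree $6$ in $N$ with coefficients polynomial in $a$ and $\tau$ is unwieldy, and recognising that the combination $m_2^{(\rm L_\beta)} - (m_1^{(\rm L_\beta)})^2$ collapses to the compact product $2N\bigl(1+a+\tau(N-1)\bigr)\bigl(10 + 2a^2 + 9a(1+\tau(N-1)) + \tau(N-1)(19 + \tau(9N-10))\bigr)$ essentially requires computer algebra to discover the right factorisation. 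A useful sanity check along the way is to specialise to $\beta = 2$ (i.e.\ $\tau = 1$) and verify agreement with $\kappa_2^{(\rm L)}$ in Proposition \ref{p3.1}; a second check comes from taking $a=0$ or $N=1$, where the relevant factors degenerate in a controlled way.
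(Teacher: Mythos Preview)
Your proposal is essentially the paper's own approach: the paragraph immediately preceding the proposition describes exactly this scheme of iterating the matrix recurrence (\ref{hAS3}) from the initial vector (\ref{hAS8}) for enough fixed values of $N$, interpolating in $N$ using the degree-$3q$ polynomial structure, and then passing to the cumulant to obtain the factored form. Your write-up merely makes the steps (assembling ${\sf B}_1,{\sf B}_2$, reading off $m_2^{(\rm L_\beta)} = 2h_2^{(0)}/h_0^{(0)}$ from (\ref{X1}), and the $\beta=2$ sanity check) more explicit than the paper does.
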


    \begin{remark}
    For general non-negative integer $q$, the moments $m_q^{(L_\beta)}$ are unchanged,
    up to a factor of $(-\tau)^q$, by the mapping of parameters \cite{DE06}
    \begin{equation}\label{nta}
    (N, \tau, a) \mapsto (-\tau N, 1/\tau, -a/\tau).
    \end{equation}
    Note that the explicit formula (\ref{mrs}) exhibits this invariance.
    Recently \cite{Fo21} it has been established that this property carries
    over to all cumulants, as indeed is exhibited  by (\ref{k2}) as one example.
    \end{remark}
    
    \begin{cor}
 For the fixed trace ensemble {\rm fL${}_\beta$}, the cumulant $\kappa_2^{(\rm fL_\beta)}$ is given by
 \begin{multline}\label{nta1}  
 \kappa_2^{(\rm fL_\beta)} = {2 \Big (10 + 2 a^2 + 9 a (1 +  \tau(-1 + N)) + 
    \tau (-1 + N) (19 + \tau (-10 + 9 N)) \Big ) \over
   (\tau N (N - 1) + N(a+1) + 1)_3} 
  \\ + { N (  \tau (N - 1) + a + 1) ( 2 \tau (N - 1) + a + 2)^2 \over
    (\tau N (N - 1) + N(a+1) + 1)_3} 
    -  \Big ( {2 \tau (N - 1) + a + 2 \over \tau N (N - 1) + N(a+1) + 1} \Big )^2.
   \end{multline}
    \end{cor}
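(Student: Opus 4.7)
The plan is to deduce the corollary by direct algebraic manipulation, using the bridge (\ref{ds1k}) between the moments of the Laguerre $\beta$-ensemble and the fixed-trace Laguerre $\beta$-ensemble. Writing $u := \tau N(N-1) + N(a+1) = N(\tau(N-1)+a+1)$, relation (\ref{ds1k}) specialised to $p=1,2$ gives
\begin{equation*}
m_1^{(\rm fL_\beta)} = \frac{m_1^{(\rm L_\beta)}}{(u)_2}, \qquad m_2^{(\rm fL_\beta)} = \frac{m_2^{(\rm L_\beta)}}{(u)_4},
\end{equation*}
so $\kappa_2^{(\rm fL_\beta)} = m_2^{(\rm fL_\beta)} - (m_1^{(\rm fL_\beta)})^2$ becomes a rational function in the quantities already evaluated for the Laguerre $\beta$-ensemble.

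Next I would invoke (\ref{k2+}) to split $m_2^{(\rm L_\beta)} = \kappa_2^{(\rm L_\beta)} + (m_1^{(\rm L_\beta)})^2$, so that
\begin{equation*}
\kappa_2^{(\rm fL_\beta)} = \frac{\kappa_2^{(\rm L_\beta)}}{(u)_4} + \frac{(m_1^{(\rm L_\beta)})^2}{(u)_4} - \bigg(\frac{m_1^{(\rm L_\beta)}}{(u)_2}\bigg)^{\!2}.
\end{equation*}
The key simplification is the observation that both $m_1^{(\rm L_\beta)}$ and $\kappa_2^{(\rm L_\beta)}$ carry an explicit factor of $u$: from (\ref{mrs}) we have $m_1^{(\rm L_\beta)} = u\,(2\tau(N-1)+a+2)$, and from (\ref{k2}) we have $\kappa_2^{(\rm L_\beta)} = 2u \cdot S$, where $S$ denotes the second bracket in (\ref{k2}). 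Using $(u)_4 = u \cdot (u+1)_3$ and $(u)_2 = u(u+1)$ these factors of $u$ cancel, producing
\begin{equation*}
\frac{\kappa_2^{(\rm L_\beta)}}{(u)_4} = \frac{2S}{(u+1)_3}, \qquad \frac{(m_1^{(\rm L_\beta)})^2}{(u)_4} = \frac{u\,(2\tau(N-1)+a+2)^2}{(u+1)_3}, \qquad \frac{m_1^{(\rm L_\beta)}}{(u)_2} = \frac{2\tau(N-1)+a+2}{u+1}.
\end{equation*}

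Substituting these three expressions back and recognising $u+1 = \tau N(N-1)+N(a+1)+1$ and $u = N(\tau(N-1)+a+1)$ reproduces term-by-term the three contributions displayed in (\ref{nta1}). The calculation is entirely mechanical once the factorisations of $m_1^{(\rm L_\beta)}$ and $\kappa_2^{(\rm L_\beta)}$ are noted; there is no real obstacle, but the step that requires some care is the identification that $\kappa_2^{(\rm L_\beta)}$ in (\ref{k2}) factors out a copy of $u = N(\tau(N-1)+a+1)$, which is exactly what makes the Pochhammer denominator $(u)_4$ collapse to $(u+1)_3$ and yields the compact form stated.
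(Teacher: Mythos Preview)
Your proposal is correct and follows essentially the same approach as the paper: use (\ref{ds1k}) to express $\kappa_2^{(\rm fL_\beta)}$ in terms of $m_1^{(\rm L_\beta)}$ and $m_2^{(\rm L_\beta)}$, then substitute (\ref{mrs}) and (\ref{k2+}). You simply make the intermediate algebra explicit by introducing $u$ and exhibiting the cancellation of the common factor $u$ in $m_1^{(\rm L_\beta)}$ and $\kappa_2^{(\rm L_\beta)}$, which the paper leaves implicit in the phrase ``substituting \dots\ as appropriate''.
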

    
    \begin{proof}
    It follows from (\ref{ds1k}) that
    $$
  \kappa_2^{(\rm fL_\beta)} = {m_2^{(\rm L_\beta)} \over      (\tau N (N - 1) + N(a+1) )_4}  -
\Big (  {m_1^{(\rm L_\beta)} \over (\tau N (N - 1) + N(a+1) )_2 } \Big )^2.
$$
Substituting (\ref{mrs})  and (\ref{k2+}) as appropriate gives (\ref{nta1}).
\end{proof}

\section*{Acknowledgements}
 This work of Peter J.~Forrester has been supported by the Australian Research Council
discovery project grant DP250102552. 
The work of S.M.~Nishigaki has been supported by Japan Society for the Promotion of Sciences
(JSPS) Grants-in-Aids for Scientific Research (C) No.\,7K05416,
and he gratefully acknowledges the financial support 
that made his visit to the University of Melbourne possible.
Sung-Soo Byun is to be thanked for providing helpful feedback on an earlier draft.

\bigskip
\section*{Appendix A}
\renewcommand{\thesection}{A} 
\setcounter{equation}{0}
\setcounter{prop}{0}

In the historical development of random matrix theory, the introduction of random unitary matrices preceded that of
random Hermitian matrices --- see the account in \cite{DF17}.
The natural analogue of a fixed trace Hermitian matrix for a random unitary matrix is to impose a unit determinant constraint.
Then, for example, the matrix group U$(N)$ becomes the subgroup SU$(N)$. 
With motivations in lattice gauge theories \cite{HW24},
the recent work by one of the present authors \cite{Ni24} took up the question of computing the exact one-point correlation
function for SU$(N)$, as well as the unit determinant versions of Dyson's circular unitary and symplectic ensembles; for an introduction 
to the latter see \cite[Ch.~2]{Fo10}.
In the case of SU$(N)$ it was found, by the use of the Selberg integral, that
\begin{equation}\label{A1}
\rho_{(1)}^{{\rm SU}(N)}(\theta) = {N \over 2 \pi} \Big ( 1 - (-1)^N{2 \over N} \cos N \theta \Big ).
\end{equation}
As an application, consider the linear statistic Tr$(U^p) = \sum_{j=1}^N e^{ip \theta_j}$, $p \in \mathbb Z$.
It follows by integrating this against (\ref{A1}) that
\begin{equation}\label{A2}
\Big \langle {\rm Tr}(U^p) \Big \rangle^{{\rm SU}(N)} =  \begin{cases} N, \quad p=0 \\ (-1)^{N-1}, \quad p= \pm N \\ 0, \quad {\rm otherwise}; \end{cases}
\end{equation}
for specific interest in this see \cite{PSt21}. Note that the corresponding average over U$(N)$ gives zero in all cases except $p=0$.

At the next level of complexity is the two-point correlation. 
In the case of U$(N)$ one has the classical result \cite{Me04}
\begin{equation}\label{A3}
\rho_{(2)}^{{\rm U}(N)}(\theta,\theta') = \bigg ( {N \over 2 \pi} \bigg )^2  -  
\bigg ({\sin \left(N (\theta - \theta')/2\right) \over 2 \pi  \sin \left((\theta - \theta')/2\right)} \bigg )^2 = 
\rho_{(1)}^{{\rm U}(N)}(\theta) \rho_{(1)}^{{\rm U}(N)}(\theta') +
\rho_{(2)}^{{\rm U}(N),T}(\theta,\theta').
\end{equation}
In the second equality
$\rho_{(1)}^{{\rm U}(N)}(\phi) = N/(2 \pi)$ independent of $\phi$,
while $\rho_{(2)}^{{\rm U}(N),T}(\theta,\theta')$ (referred to as the truncated two-point correlation) is the second term in the first equality of (\ref{A3}) and has the feature of being of order unity for $\theta \ne \theta'$ fixed. Generally the truncated two-point correlation is of interest as the kernel in the fluctuation formula
for the covariance of two linear statistics
\begin{equation}\label{A3a}
{\rm Cov} \Big ( \sum_{j=1}^N f(\theta_j), \sum_{j=1}^N g(\theta_j) \Big ) =
-{1 \over 2} \int_{-\pi}^{\pi} d \theta \int_{-\pi}^{\pi} d \theta' \,
( f(\theta) - g(\theta')) ( g(\theta) - f(\theta')) \rho_{(2)}^{T}(\theta,\theta');
\end{equation}
see e.g.~\cite[Prop.~2.1]{Fo23}.

Here we provide the exact evaluation of the two-point correlation for SU$(N)$. For this we will use
the functional differentiation method presented in \cite[Ch.~5.2.1]{Fo10}. To begin the calculation, we note that 
the Haar measure on U$(N)$, when written in terms of its eigenvalues
(see e.g.~\cite[Exercises 2.2 q.1]{Fo10}), 
 leads via the additional constraint $\sum_j \theta_j= 0~\mbox{mod}~2\pi$
 to the joint distribution function of
the eigenphases for SU$(N)$
\begin{align}
&\delta \Big (\sum_{j=1}^{N} \theta_j~\mbox{mod}~2\pi \Big )\cdot
\frac{1}{(2\pi)^{N-1}}
\frac{1}{N!}
|\Delta_N(\theta)|^2
=
\frac{1}{(2\pi)^N}
\frac{1}{N!}
\sum_{n=-\infty}^\infty e^{in(\theta_1+\cdots+\theta_{N})}
|\Delta_N(\theta)|^2,
\label{A4}\\
&\Delta_N(\theta):=
\prod_{1\leq j < k\leq N}
(e^{i \theta_k}-e^{i\theta_j})=\det[e^{ik \theta_j}]_{j=1,\ldots,N \atop k=0,\ldots,N-1}.
\nonumber
\end{align}
To obtain the first equality in (\ref{A4}), the Fourier sum form of the $2 \pi$-periodic Dirac delta function has been used.

Clearly, the $n=0$ term in (\ref{A4})
corresponds to the unconstrained U$(N)$ case.
We introduce a ``probe'' one-point function $a(\theta)$,
and define the generalized partition function by
\begin{equation}\label{A5}
Z_N[a]:=\mathbb{E}\Bigl[\prod_{j=1}^{N} a(\theta_j)\Bigr]=
\sum_{n=-\infty}^\infty
\frac{1}{N!}
\int_{-\pi}^\pi \!\!\cdots\!\!\int_{-\pi}^\pi
\prod_{j=1}^{N} \left(\frac{d\theta_j}{2\pi} a(\theta_j) e^{in\theta_j}\right)
|\Delta_N(\theta)|^2.
\end{equation}
The significance of this for present purposes is that
the $p$-point correlation functions are expressed as functional derivatives of $Z_N[a]$,
\begin{equation}
\rho_{(1)}^{{\rm SU}(N)}(\theta) = \left.
\frac{\delta Z_N[a]}{\delta a(\theta)} \right|_{a=1},
\quad
\rho_{(2)}^{{\rm SU}(N)}(\theta,\theta') = \left.
\frac{\delta^2 Z_N[a]}{\delta a(\theta)\delta a(\theta')}\right|_{a=1},
\label{A7}
\end{equation}
etc.

By expressing the Vandermonde determinant $\Delta_N(\theta)$
and its conjugate as sums of
permutations $P, Q\in \mathfrak{S}_N$, we can deduce that
each ($n^{\mathrm{th}}$) component of
$Z_N[a]$ takes the form of the Toeplitz determinant,
\begin{align}
Z_N[a]&=
\sum_{n=-\infty}^\infty
\frac{1}{N!}
\sum_{P,Q\in \mathfrak{S}_N}\sgn(P)\sgn(Q)
\prod_{j=1}^{N}
\int_{-\pi}^\pi
\frac{d\theta_j}{2\pi} a(\theta_j) e^{in\theta_j}
e^{iP(j)\theta_j}e^{-iQ(j)\theta_j}
\nonumber\\
&=\sum_{n=-\infty}^\infty
\sum_{R\in \mathfrak{S}_N}\sgn(R)
\prod_{j=1}^{N}
\int_{-\pi}^\pi
\frac{d\theta_j}{2\pi} a(\theta_j) e^{i(n+\ell-R(\ell))\theta_j}
~~~(\ell=P(j),~R=Q\circ P^{-1})
\nonumber\\
&=\sum_{n=-\infty}^\infty
\det\left[
\int_{-\pi}^\pi
\frac{d\theta}{2\pi} a(\theta) e^{i(n+m - \ell)\theta}
\right]_{\ell,m=1}^{N} 
\label{A8}
\end{align}
Here to obtain the final equality, the general fact that determinant of a matrix is equal to the determinant of the transpose of the matrix has been used.
Before using (\ref{A7}) to compute the two-point correlation,
it is instructive to first show how the first relation in  (\ref{A7}) can be used to
re-derive (\ref{A1}).

We have remarked that the $n=0$ term in 
(\ref{A4})
corresponds to the unconstrained U$(N)$ case. This implies that
the functional derivative operation in the first relation in (\ref{A7}) applied to
the $n=0$ term  in the final line of (\ref{A8}) yields
 $\rho_{(1)}^{{\rm U}(N)}(\theta)=\frac{N}{2\pi}$, as can be checked directly. Consider now the $n=1$ term
 of (\ref{A8}).
By applying the functional differentiation
$\frac{\delta}{\delta a(\theta)}$ at $a=1$
to each column of the determinant (it is generally true that a single functional derivative of a determinant can be carried out column-by-column), 
and using the simple fact that for $\ell, m , n \in \mathbb Z,$
\begin{equation}\label{A9}
\int_{-\pi}^\pi
\frac{d\theta}{2\pi} e^{i(n+\ell-m)\theta}
=\delta_{m-\ell,n},
\end{equation}
one sees that a non-zero contribution  comes from the final column ($m=N$),
\begin{equation}
\left|
\begin{array}{cccc}
0 & 0 & \cdots & e^{i(1+N-1)\theta}/2\pi\\
1 & 0 & \cdots & e^{i(1+N-2)\theta}/2\pi\\
\vdots  & \ddots & \ddots & \vdots \\
0  & \cdots  & 1 & e^{i(1+0)\theta}/2\pi
\end{array}
\right|
=(-1)^{N-1} \frac{e^{iN\theta}}{2\pi}.
\label{AX}
\end{equation}
By the same token, the $n=-1$ term of (\ref{A8}) contributes
the complex coujugate of (\ref{AX})
to $\rho_{(1)}^{{\rm SU}(N)}(\theta)$. On the other hand
terms with $|n|>1$ do not contribute as they
contain columns with entirely zero entries due to (\ref{A9}).
Thus for the one-point correlation function we obtain
\begin{equation}\label{A10}
\rho_{(1)}^{{\rm SU}(N)}(\theta) = \frac{N}{2\pi}
- (-1)^N \frac{e^{iN\theta}+e^{-iN\theta}}{2\pi},
\end{equation}
which is indeed consistent with (\ref{A1}).

According to (the second relation of \ref{A7}) two functional differentiation operations are required in (\ref{A8}). From the determinant structure, this is equivalent to taking the functional derivative of any two distinct columns, and summing over all choices.
One observes that the determinants with  $|n|>2$ do not contribute due to (\ref{A9}).
As the contribution from $n=0$ is guaranteed to give the U$(N)$ result (\ref{A3}),
one only needs to evaluate contributions from the terms with $|n|=1$ and $2$.
Easier is the $n=2$ contribution, in which 
$\frac{\delta}{\delta a(\theta)}$ and $\frac{\delta}{\delta a(\theta')}$ 
must act on the final two columns ($m=N-1, N$) to yield
\begin{equation}
\left|
\begin{array}{ccccc}
0 & 0 & \cdots &e^{i(2+N-2)\theta}/2\pi & e^{i(2+N-1)\theta'}/2\pi\\
0 & 0 & \cdots &e^{i(2+N-3)\theta}/2\pi & e^{i(2+N-2)\theta'}/2\pi\\
1 &  0 & \cdots &e^{i(2+N-4)\theta}/2\pi & e^{i(2+N-3)\theta'}/2\pi\\
\vdots  & \ddots & \ddots & \vdots & \vdots \\
0 &  \cdots & 1 &e^{i(2-1)\theta}/2\pi & e^{i(2+0)\theta'}/2\pi
\end{array}
\right|
=
\frac{
e^{iN(\theta+\theta')}
(1-e^{i(\theta'-\theta)})
}{(2\pi)^2}
\label{A11}
\end{equation}
and the one with $\theta$ and $\theta'$ exchanged.

For the $n=1$ term, 
one of $\frac{\delta}{\delta a(\theta)}$ and $\frac{\delta}{\delta a(\theta')}$
must act on the final column ($m=N$) but the other can act on any of
the remaining columns ($m=1,\ldots,N-1$).
The contribution from choosing the $m^{\mathrm{th}}$ column is
\begin{equation}
\left|
\begin{array}{ccccccc}
0 & 0 &  \cdots &e^{i(1+m-1)\theta}/2\pi & \cdots & 0 & e^{i(1+N-1)\theta'}/2\pi\\
1 & 0 &  \cdots &e^{i(1+m-2)\theta}/2\pi & \cdots & 0 & e^{i(1+N-2)\theta'}/2\pi\\
0 & 1 &  \cdots &e^{i(1+m-3)\theta}/2\pi & \cdots & 0 & e^{i(1+N-3)\theta'}/2\pi\\
\vdots &  \vdots & \ddots & \vdots & \cdots & \vdots & \vdots \\
0 &  0 & \cdots & 1/2\pi & \cdots & 0 & e^{i(1+N-m-1)\theta'}/2\pi \\
\vdots &  \vdots & \cdots & \vdots & \ddots & \vdots & \vdots \\
0 & 0 &  \cdots &e^{i(1+m-N)\theta}/2\pi & \cdots & 1 & e^{i(1+0)\theta'}/2\pi
\end{array}
\right|
=(-1)^{N}
\frac{e^{iN\theta'}
(e^{im(\theta-\theta')}-1)
}{(2\pi)^2}
\label{A12}
\end{equation}
and the one with $\theta$ and $\theta'$ exchanged.
By summing up contributions from $m=1,\ldots,N-1$, and
including complex-conjugated contributions from $n=-2$ and $-1$,
we finally obtain the two-point correlation function from (\ref{A11}) and (\ref{A12}),
\begin{multline}\label{A14}
\rho_{(2)}^{{\rm SU}(N)}(\theta,\theta') =
\rho_{(2)}^{{\rm U}(N)}(\theta,\theta')+
\frac{2}{\pi^2}
\cos N (\theta +\theta')
\sin^2\frac{\theta-\theta'}{2}
\\
+(-1)^N\frac{1}{\pi ^2}
\cos \frac{N(\theta+\theta')}{2}
\left(
\sin \frac{N(\theta -\theta')}{2}
\cot \frac{\theta -\theta'}{2}
-N \cos \frac{ N(\theta -\theta')}{2}\right).
\end{multline}
Here we employed the summation formula
\[
\sum_{m=1}^{N-1}
\sin m x \,\sin (N-m)x=
\frac{1}{2} (\sin N x \cot x -N \cos N x).
\]
After use of an elementary trigonometric identity, and recalling
(\ref{A1}) and (\ref{A3}), an alternative way to write (\ref{A14}) is seen to be
\begin{equation}
\rho_{(2)}^{{\rm SU}(N)}(\theta,\theta') = 
\rho_{(1)}^{{\rm SU}(N)}(\theta) \rho_{(1)}^{{\rm SU}(N)}(\theta') + 
\rho_{(2)}^{{\rm SU}(N),T}(\theta,\theta'),
\end{equation}
where
\begin{multline}\label{A16}
\rho_{(2)}^{{\rm SU}(N),T}(\theta,\theta') =
\rho_{(2)}^{{\rm U}(N),T}(\theta,\theta') +
\frac{2}{\pi^2}
\cos N (\theta +\theta')
\sin^2\frac{\theta-\theta'}{2}
\\
+(-1)^N\frac{1}{\pi ^2}
\cos \frac{N(\theta+\theta')}{2}
\sin \frac{N(\theta-\theta')}{2}
\cot \frac{\theta-\theta'}{2} 
- {1 \over \pi^2}
\cos N \theta \cos N \theta'. 
\end{multline}
Note that in (\ref{A16}), in distinction to (\ref{A14}), all the correction terms to the U$(N)$ result are of order unity.

In relation to its use in (\ref{A3a}), a strategy is to write the given $f,g$ as Fourier series, and similarly make use of the Fourier series expansion of $\rho_{(2)}^{{\rm SU}(N),T}$, as implied by the intermediate working in deriving (\ref{A16}). Another point of interest is the large $N$ expansion of (\ref{A16}) with bulk scaling, which is also a topic of present day interest for the $\beta$ generalisation of Dyson's circular ensembles
\cite{FS25}, and for spacing ratio distributions \cite{Ni25}.
Bulk scaling corresponds to the unfolding
$\theta = 2 \pi x/ N$, $\theta' = 2 \pi x'/N$ so that the mean eigenvalue spacing is unity. Recalling too (\ref{A3}), we compute
from (\ref{A16}) that 
\begin{multline}\label{A17}
(2 \pi /N)^2 \rho_{(2)}^{{\rm SU}(N),T}(2 \pi x/N, 2 \pi x'/N) \\
=
-\left(\frac{\sin \pi(x-x')}{\pi(x-x')}\right)^2
+\frac{4 (-1)^N}{N} \cos \pi(x+x') \frac{\sin \pi(x-x')}{\pi (x-x')}
+\mathcal{O}\left(\frac{1}{N^2}\right).
\end{multline}
It is notable that
the unit determinant constraint alters
the leading correction to the universal sine-kernel limit
(the first term of (\ref{A17})) 
from of $\mathcal{O}(N^{-2})$ to $\mathcal{O}(N^{-1})$,
which moreover decays as $1/(x-x')$ for a large separation of two eigenvalues.

\small

\providecommand{\bysame}{\leavevmode\hbox to3em{\hrulefill}\thinspace}
\providecommand{\MR}{\relax\ifhmode\unskip\space\fi MR }
\providecommand{\MRhref}[2]{%
  \href{http://www.ams.org/mathscinet-getitem?mr=#1}{#2}
}
\providecommand{\href}[2]{#2}


\begin{thebibliography}{10}

\bibitem{ATK09} S.~Adachi, M.~Toda and H.~Hiroto, \emph{Random matrix theory of singular values of rectangular
  complex matrices {I}: exact formula of one-body distribution function in fixed-trace ensemble}, Annals of Physics \textbf{324}
  (2009), 2278--2358.
  
  \bibitem{ATK11}
S.~Adachi, M.~Toda, and H.~Kubotani, \emph{Asymptotic analysis of singular
  values of rectangular complex matrices in the {L}aguerre and fixed trace
  ensembles}, J. Phys. A \textbf{44} (2011), 292002(8pp).
  
  \bibitem{ABO25}
 G. Akemann, S.-S. Byun and S.~Oh, \emph{Spectral moments of complex and symplectic non-Hermitian random matrices},
 arXiv:2505.1227.
  
  \bibitem{AC18}  
  G. Akemann and M. Cikovic, \emph{Products of random matrices from fixed trace and induced Ginibre ensembles},
  J. Phys. A  \textbf{51} (2018), 184002.

\bibitem{ACV18} 
G. Akemann, M. Cikovic and M. Venker, \emph{Universality at weak and strong non-Hermiticity beyond the elliptic Ginibre ensemble}, Comm. Math.
Phys. 362 (2018), 1111–1141.
  
  
  \bibitem{AV00} 
  G.~Akemann and G.~Vernizzi,  \emph{Macroscopic and microscopic (non-)universality of compact support random matrix theory},
  Nucl. Phys. B, \textbf{583} (2000), 739--757. 
  
 \bibitem{AV11}  
  G.~Akemann and P.~Vivo, \emph{Compact smallest eigenvalue expressions in Wishart-Laguerre ensembles
  with or without a fixed trace}, J. Stat. Mech. \textbf{2011} (2011), P05020.

\bibitem{BZ06}
I.~Bengtsson and K.~Zyczkowski, \emph{Geometry of quantum states}, Cambridge
  University Press, Cambridge, 2006.
  
  \bibitem{BD19}
  E.~Bianchi and P.~Don\`a.   \emph{Typical entanglement entropy in the presence of a center:  Page curve and its variance},
  Phys.~Rev. D, \textbf{100}  (2019), 105010. 
  
   \bibitem{BHK21}
     E.~Bianchi, L.~Hackl and M.~Kieburg, \emph{The Page curve for fermionic Gaussian states},
    Phys. Rev. B 103, 241118 (2021).
    
    \bibitem{BIPZ78}
E.~Br\'ezin, C.~Itzykson, G.~Parisi, and J.B. Zuber, \emph{Planar diagrams},
  Commun. Math. Phys. \textbf{59} (1978), 35--51.
  
  \bibitem{By24}
  S.-S. Byun, \emph{Harer-Zagier type recursion formula for the elliptic GinOE}, Bull. Sci. Math. \textbf{197} (2024), 103526.
 
   \bibitem{BF24}
 S.-S. Byun and P.J. Forrester, \emph{Spectral moments of the real Ginibre ensemble}, Ramanujan J. \textbf{64} (2024), 1497?1519.    
     
      \bibitem{BF25}
   S.-S.~Byun and P.J.~Forrester,   \emph{Progress on the study of the Ginibre ensembles},
  KIAS Springer Series in Mathematics \textbf{3}, Springer, 2025.
  
   \bibitem{BFO24}
  S.-S. Byun, P.J. Forrester and J. Oh, \emph{q-deformed Gaussian unitary ensemble: spectral moments and genus-type expansions},
arXiv:2404.03400.

\bibitem{BJO25}
S.-S. Byun, Y.-G. Jung and J.~Oh, \emph{Spectral analysis of
$q$-deformed unitary ensembles with the Al-Salam--Carlitz weight},
arXiv:2507.18042.
  
  \bibitem{CI10} 
  Y. Chen and A. Its, \emph{Painlev\'e III and a singular linear statistics in Hermitian random matrix ensembles, I},
J. Approx. Theory, \textbf{162} (2010), 270--297.
  
  \bibitem{CLZ10}
   Y.~Chen, D-Z.~Liu and  D-S Zhou, \emph{Smallest eigenvalue distribution of the fixed-trace Laguerre beta-ensemble}
   J. Phys. A \textbf{43} (2010), 315303.
   
   \bibitem{CMOS18}
	F.D. Cunden, F.~Mezzadri, N. O'Connell and N.~Simm, \emph{Moments of random matrices and hypergeometric orthogonal polynomials}, 
	Commun.~Math. Phys. \textbf{369} (2019), 1091--1145.
   
    \bibitem{CMSV16}
F.D. Cunden, F.~Mezzadri, N.~Simm, and P.~Vivo, \emph{Large-{$N$} expansion for the time-delay matrix of chaotic
  cavities}, J. Math. Phys. \textbf{57} (2016), 111901.
  
    \bibitem{Da72}
A.W. Davis, \emph{On the marginal distributions of the latent roots of the
  multivariable beta matrix}, Ann. Math. Statist. \textbf{43} (1972),
  1664--1669.
  
  \bibitem{DFX22} D.~Dai, P.J.~Forrester and S.-X.~Xu, \emph{Applications in random matrix theory of a PIII$'$
  $\tau$-function sequence from Okamoto's Hamiltonian formulation}, Random Matrices: Th.
Appl. \textbf{11} (2022), 2250014.

 \bibitem{DF17}
P.~Diaconis and P.J. Forrester, \emph{Hurwitz and the origin of random matrix
  theory in mathematics}, Random Matrix Th. Appl. \textbf{6} (2017), 1730001.
  
    \bibitem{DE06}
I.~Dumitriu and A.~Edelman, \emph{Global spectrum fluctuations for the $\beta$-{H}ermite and
  $\beta$-{L}aguerre ensembles via matrix models}, J. Math. Phys. \textbf{47}
  (2006), 063302.
  
      \bibitem{Ek74}
  S.R.~Eckert, \emph{Distributions of the individual ordered roots of random matrices}, PhD. thesis,
  University of Adelaide, 1974.
   
    \bibitem{Fo93}
P.J. Forrester, \emph{Recurrence equations for the computation of correlations in the
  $1/r^2$ quantum many body system}, J. Stat. Phys. \textbf{72} (1993), 39--50.
  
    \bibitem{Fo94b}
P.J. Forrester, \emph{Exact results and universal asymptotics in the {Laguerre} random
  matrix ensemble}, J. Math. Phys. \textbf{35} (1994), 2539--2551.

 \bibitem{Fo10}
P.J. Forrester, \emph{Log-gases and random matrices}, Princeton University Press,
  Princeton, NJ, 2010.
  
  \bibitem{Fo21}
P.J.~Forrester, \emph{High-low temperature dualities for the classical $\beta$-ensembles},
Random Matrices: Th.
Appl. \textbf{11}, (2022) 2250035.

     \bibitem{Fo23}
P.J. Forrester, \emph{A review of exact results for fluctuation formulas in random matrix theory},
Probab. Surveys \textbf{20} (2023), 170--225. 

 \bibitem{Fo25}
 P.J. Forrester, \emph{In memoriam: aspects of Santosh Kumar’s work on exact results
in RMT}, J. Phys. A \textbf{58} (2025), 463001.
  
  \bibitem{FI16}
P.~J. Forrester and J.~R. Ipsen, \emph{Selberg integral theory and Muttalib--Borodin ensembles},
Advances Appl. Math. \textbf{95} (2018), 152--176.
  
    \bibitem{FK19}
P.J. Forrester and S.~Kumar, 
\emph{Recursion scheme for the largest $\beta$-Wishart-Laguerre
eigenvalue and Landauer conductance in quantum transport}, J. Phys. A \textbf{52} (2019), 42LT02. 

  \bibitem{FK20}
P.J.~Forrester and S.~Kumar, \emph{Computable structural formulas for the distribution of the $\beta$-Jacobi
eigenvalues}, Ramanujan J \textbf{61} (2023), 87--110.

\bibitem{FK22}
P.J. Forrester and S. Kumar,
\emph{Differential recurrences for the distribution of the trace of the $\beta$-Jacobi ensemble},
Physica D, \textbf{434} (2022), 133220.

\bibitem{FK24}
P.J. Forrester and S. Kumar, \emph{Computation of marginal eigenvalue distributions in the
Laguerre and Jacobi $\beta$ ensembles}, arXiv:2402.16069.

\bibitem{FR12} P.J. Forrester and E. M. Rains, 
\emph{A Fuchsian matrix differential equation for Selberg correlation integrals},
Commun. Math. Phys. 309, 771 (2012).

\bibitem{FS25} P.J. Forrester and B.-J.~Shen,
\emph{Finite size corrections in the bulk for circular
$\beta$ ensembles}, arXiv:2505.09865.

\bibitem{FT19}
		P.J.~Forrester and A.K.~Trinh, \emph{Optimal soft edge scaling variables for the Gaussian and Laguerre even $\beta$ ensembles}, Nuclear Physics B, \textbf{938}
		(2019), 621-639.
  
  \bibitem{FW00}
P.J. Forrester and N.S. Witte, \emph{Application of the $\tau$-function theory
  of {Painlev\'e} equations to random matrices: {PIV}, {PII} and the {GUE}},
  Commun. Math. Phys. \textbf{219} (2000), 357--398.
  
 \bibitem{Gi07}  
  O. Giraud,  
 \emph{Distribution of bipartite entanglement for random pure states}, J. Phys. A
 \textbf{40} (2007), 2793.
  
 \bibitem{Gi07a}  
  O. Giraud, \emph{Purity distribution for bipartite random pure states},  J. Phys. A
  \textbf{40} (2007), F1053.
  
  \bibitem{GT05}
F.~G\"otze and A.~Tikhomirov, \emph{The rate of convergence for spectra of
  {GUE} and {LUE} matrix ensembles}, Cent. Eur. J. Math. \textbf{3} (2005),
  666--704.
   
  \bibitem{HW24} M. Hanada and H. Watanabe, 
  \emph{On thermal transition in QCD}, Prog. Theor. Exp. Phys. 2024 043B02.

   \bibitem{HT03} U. Haagerup and S. Thorbjørnsen, \emph{Random matrices with complex Gaussian entries}, Expo. Math. \textbf{21} (2003), 293--337.


 \bibitem{HSS92}
   P. J. Hanlon, R. P. Stanley and J. R. Stembridge, \emph{Some combinatorial aspects of the spectra of normally distributed random matrices}, Contemp. Math. \textbf{138} (1992), 151--174.
 
   \bibitem{HZ86} J. Harer and D. Zagier, \emph{The Euler characteristic of the moduli space of curves}, Invent. Math. \textbf{85} (1986), 457--485.
  
  \bibitem{Hu63}
L.K. Hua,
\emph{Analysis of Functions of Several Complex Variables in the
Classical Domains},
American Mathematical Society, Providence, RI,
1963.

\bibitem{HWC21}
Y.~Huang, L.~Wei and B.~Collaku, \emph{Kurtosis of von Newmann entanglement entropy},
 J. Phys. A: Math. Theor. \textbf{54}, (2021) 504003. 

  \bibitem{KF21} 
   J. Kudler-Flam, \emph{Relative entropy of random states and black holes},   Phys. Rev. Lett. \textbf{126} (2021), 171603.
 
   \bibitem{Ku19} S. Kumar, \emph{Recursion for the smallest eigenvalue density of beta-Wishart-Laguerre ensemble},
  J. Stat. Phys. {\bf 175}, (2019), 126.
  
  \bibitem{KP11a}
  S.~Kumar and A.~Pandey, \emph{Entanglement in random pure states: spectral density
  and average von Neumann entropy}, J.~Phys. A  \textbf{44} (2011), 445301.
  
   \bibitem{Le04}
M.~Ledoux,
\emph{Differential operators and spectral distributions of invariant ensembles
from the classical orthogonal polynomials. The continuous case},
Electron. J. Probab. \textbf{9} (2004), 177--208.

  \bibitem{LW21}
    S.-H.~Li and L.~Wei, \emph{Moments of quantum purity and biorthogonal polynomial recurrence},
    J.~Phys.~A (2021),  54 445204.

 \bibitem{Lu78}
E. Lubkin, \emph{Entropy of ann-system from its correlation with a $k$-reservoir}, J. Math. Phys. \textbf{19} (1978), 1028--1031.

     \bibitem{Me04}
M.L. Mehta, \emph{Random matrices}, 3rd ed., Elsevier, San Diego, 2004.

 \bibitem{MS13}
F. Mezzadri and N.J. Simm, Tau-function theory of chaotic quantum transport with $\beta$ = 1, 2, 4, \emph{Comm.
Math. Phys.}, \textbf{324} (2013), 465--513.

    \bibitem{MRW15}
F.~Mezzadri, A.K. Reynolds and B.~Winn, \emph{Moments of the eigenvalue densities and of the secular coefficients of $\beta$-ensembles}, Nonlinearity \textbf{30}
  (2017), 1034.

  \bibitem{NMV10}
C. Nadal, S. N. Majumdar, and M. Vergassola, 
\emph{Phase transitions in the distribution of bipartite
entanglement of a random pure state}, Phys. Rev. Lett. \textbf{104} (2010), 110501. 

\bibitem{Ni24}
S.M. Nishigaki,
\emph{Eigenphase distributions of unimodular circular ensembles}, Prog. Theor. Exp. Phys. 2024 021A02.

\bibitem{Ni25}
S.M. Nishigaki,
\emph{Distributions of consecutive level spacings of circular unitary ensemble and their ratio: finite-size corrections and Riemann $\zeta$
zeros}, arXiv:2507.10193.

\bibitem{Ok86}
K.~Okamoto, \emph{Studies of the {Painlev\'e} equations {III}. {Second} and
  fourth {Painlev\'e} equations, {$P_{II}$} and {$P_{IV}$}}, Math. Ann.
  \textbf{275} (1986), 221--255.
   
 \bibitem{Ok21}
 K.~Okuyama, \emph{Capacity of entanglement in a random pure state}, Phys. Lett. B \textbf{820} (2021), 136600.
    
  \bibitem{OK07}
  V.A. Osipov and E. Kanzieper, \emph{Are bosonic replicas faulty?}, Phys. Rev. Lett., \textbf{99} (2007), 050602.
  
  \bibitem{Pa93}
D.~Page,   \emph{Average entropy of a subsystem}, Phys. Rev. Lett.
\textbf{71} (1993),  1291--1294.

\bibitem{PSt21} Physics StackExchange, \emph{Trace identity for SU$(N)$
matrix integral}, online resource, 2021.

  \bibitem{PBM90}
  A.P.~Prudnikov, Yu.A. Brychkov and O.L.~Marichev, \emph{Integrals and Series, Vol 3:
More Special Functions}, 
Gordon and Breach, New York, 1990.
  
  \bibitem{RF19} A.A.~Rahman and P.J.~Forrester,
\emph{Linear differential equations for the
resolvents of the classical matrix ensembles },
Random Matrices Th. Appl.  
 \textbf{10} (2021), 2250003.

 \bibitem{SK19}
A.~Sarkar and  S.~Kumar,  \emph{Bures-Hall  ensemble:  spectral  densities  and  average  entropies},
J.Phys. A,  \textbf{52} (2019) 295203

 \bibitem{SC03}
A.~Scott and C.~Caves,  \emph{Entangling power of the quantum baker's map}, J. Phys. A, \textbf{36} (2003), 9553--9576.
  
  \bibitem{Vi10}
P.~Vivo, \emph{Entangled random pure states with orthogonal symmetry: exact
  results}, J. Phys. A \textbf{43} (2010), 405206.
  
 \bibitem{Vi11}
P.~Vivo, \emph{Largest Schmidt eigenvalue of random pure states and conductance distribution in chaotic cavities},
J. Stat. Mech. \textbf{2011} 2011, P01022.
  
  \bibitem{VPO16}
  P. Vivo, M.P.  Pato and G. Oshanin, \emph{Random pure states:  quantifying bipartite entanglement beyond the linear statistics},
  Phys. Rev. E \textbf{93} (2016), 052106.
  
   \bibitem{We17}
  L.~Wei,  \emph{Proof of Vivo-Pato-Oshanin's conjecture on the fluctuation of von Neumann entropy},
   Phys. Rev. E \textbf{96} (2017), 022106.
  
  \bibitem{We19}
 L.~Wei,  \emph{On the exact variance of Tsallis entanglement entropy in a random pure state}, Entropy \textbf{21} (2019), 539.
 
 \bibitem{We20a}
 L.~Wei,  \emph{Exact variance of von Neumann entanglement entropy over the Bures-Hall measure},
 Phys. Rev. E \textbf{102} (202), 062128.
 
  \bibitem{We20b}
  L.~Wei, \emph{Skewness of von Neumann entanglement entropy}, J. Phys. A \textbf{53} (2020), 075302.
  
  \bibitem{WW21}
  L.~Wei and N.S.~Witte, \emph{Quantum interpolating ensemble: Bi-orthogonal orthogonals and average entropies},
  Random Matrices Th. Appl.  2250055.
 \textbf{12} (2023), .
  
  \bibitem{Wi58} E. P.~Wigner, \emph{On the distribution of the roots of certain symmetric matrices}, Ann. Math. \textbf{67} (1958), 325--327.  
  
 \bibitem{XDZ14}  
  S.-X. Xu, D. Dai and Y.-Q. Zhao, \emph{Critical edge behavior and the Bessel to Airy transition in the singularly
perturbed Laguerre unitary ensemble}, Comm. Math. Phys., \textbf{332} (2014), 1257--1296.
    
  \bibitem{ZS01}
K.~Zyczkowski and H.-J. Sommers, \emph{Induced measures in the space of mixed
  quantum states}, J.~Phys. A \textbf{34} (2001), 7111--7125.
  
   \end{thebibliography}
\end{document}